\renewcommand{\vec}[1]{\boldsymbol{#1}}
\newtheorem{thm}{Theorem}
\newtheorem{lem}{Lemma}
\newtheorem{defn}{Definition}
\newtheorem{rem}{Remark}
\begin{document}

\title{Combinatorial Resources Auction in Decentralized Edge-Thing Systems Using Blockchain and Differential Privacy}

\author{Jianxiong Guo,
	Xingjian Ding,
	and Weijia Jia,~\IEEEmembership{Fellow,~IEEE}
	\thanks{J. Guo and W. Jia are with the BNU-UIC Institute of Artificial Intelligence and Future Networks, Beijing Normal University at Zhuhai, Zhuhai, Guangdong 519087, China, and also with the Guangdong Key Lab of AI and Multi-Modal Data Processing, BNU-HKBU United International College, Zhuhai, Guangdong 519087, China. (e-mail: jianxiongguo@bnu.edu.cn; jiawj@uic.edu.cn)
		
	X. Ding are with the School of Software Engineering, Beijing University of Technology, Beijing 100124, China. (e-mail: dxj@bjut.edu.cn)
	
	\textit{(Corresponding author: Jianxiong Guo.)}
	}% <-this 
	\thanks{Manuscript received April 19, 2005; revised August 26, 2015.}}

\markboth{Journal of \LaTeX\ Class Files,~Vol.~14, No.~8, August~2015}%
{Shell \MakeLowercase{\textit{et al.}}: Bare Demo of IEEEtran.cls for IEEE Journals}

\maketitle

\begin{abstract}
	With the continuous expansion of Internet of Things (IoT) devices, edge computing mode has emerged in recent years to overcome the shortcomings of traditional cloud computing mode, such as high delay, network congestion, and large resource consumption. Thus, edge-thing systems will replace the classic cloud-thing/cloud-edge-thing systems and become mainstream gradually, where IoT devices can offload their tasks to neighboring edge nodes. A common problem is how to utilize edge computing resources. For the sake of fairness, double auction can be used in the edge-thing system to achieve an effective resource allocation and pricing mechanism. Due to the lack of third-party management agencies and mutual distrust between nodes, in our edge-thing systems, we introduce blockchains to prevent malicious nodes from tampering with transaction records and smart contracts to act as an auctioneer to realize resources auction. Since the auction results stored in this blockchain-based system are transparent, they are threatened with inference attacks. Thus in this paper, we design a differentially private combinatorial double auction mechanism by exploring the exponential mechanism such that maximizing the revenue of edge computing platform, in which each IoT device requests a resource bundle and edge nodes compete with each other to provide resources. It can not only guarantee approximate truthfulness and high revenue, but also ensure privacy security. Through necessary theoretical analysis and numerical simulations, the effectiveness of our proposed mechanisms can be validated.
\end{abstract}

\begin{IEEEkeywords}
	Internet of Things, Blockchain, Smart contract, Edge-thing system, Combinatorial auction, Truthfulness, Revenue, Differential privacy.
\end{IEEEkeywords}

\IEEEpeerreviewmaketitle

\section{Introduction}
\IEEEPARstart{W}{ith} the rapid improvement of electronic equipment and communication infrastructure, Internet of Things (IoT) has become a hot research topic connecting the physical environment to the syberspace system. IoT devices are ubiquitous and play an important role in our lives, such as mobile phones, cemeras, automobiles, and traffic sensors. In recent years, the number of IoT devices has exploded. Based on a survey conducted by Cisco \cite{ni2017securing} \cite{alli2019secoff}, they predicted that more than 75 billion IoT devices would go into operation before 2025. These IoT devices produce a large amount of data. How to utilize these data to better serve the society has attracted more and more attention in academia and industry, and has driven a series of downstream industries such as smart home, smart supply chain, healthcare, and product traceability.

It is not easy to process these data to produce valuable information, which usually involves some artificial intelligence algorithms or data mining techniques. It requires IoT devices to have a certain amount of computing power and storage space. However, most IoT devices are lightweight, which can only temporarily store a small amount of data and perform simple operations. In the traditional cloud-thing system, IoT devices rely on the computing power, network bandwidth, and storage space of cloud centers to implement their own functions. Usually, cloud centers are far away from IoT devices, which leads to high energy consumption and network deley. In addition, it also faces the threat of the single point of failure \cite{aitzhan2016security} \cite{guo2020blockchain}, making this system more unreliable. As a result, the cloud-edge-thing system came into being. There are a lot of edge servers distributed in every corner of the space evenly. These edge nodes provide nearby IoT devices with the resources they need. Thus, IoT devices can offload their tasks to neighboring edge nodes instead of cloud centers. Although it overcomes some defects, especially long distance transmission, in cloud-thing systems, the cloud-edge-thing system does not get rid of the control of cloud centers completely.

Therefore, we focus on the edge-thing system in this paper, which is completely decentralized without the management of a third-party authority. But in the resources allocation between IoT devices and edge nodes, they do not trust each other due the conflict of interests, in which both entities want to maximize their own revenues. Moreover, the transaction records stored in edge nodes may be maliciously tampered with. With this in mind, blockchain \cite{nakamoto2008bitcoin} is an opportunity to provide a secure peer-to-peer (P2P) network. Blockchain is a distributed ledger for storing real-time data generated by all active participants in the system. It can not only achieve complete decentralization, but also has the characteristics of tampering-proof and transparency. The secure P2P network proved by blockchain can be used as a supplementary technique to design our edge-thing system.

In order to reflect the real market fluctuation and relationship between supply and demand, auction has been proven to be effective, so that IoT devices can get the resources they need at acceptable prices and edge nodes can benefit from providing resources. In this paper, we design a blockchain-based edge-thing system, where the resources allocation and pricing are realized by a combinatorial double auction mechanism. Here, IoT devices are buyers requesting resources and edge nodes are sellers providing resources. Because we have adopted a completely decentralized architecture, there is no suitable entity to act as an auctioneer responsible for executing the auction mechanism and deciding auction results. In our system, the auction mechanism is stored in the smart contract that is built in the blockchain, which can be run automatically when receiving all requests from IoT devices and edge nodes. Different IoT devices have different requirements for each resource type. For example, a device in smart home needs more computing power to implement intelligent algorithms, but a traffic monitor needs more storage space to store road condition data. Each IoT device ususally request a bundle of resources according to its task and gives a total bid, which is the reason for the formulation of a combinatorial double auction. The core of designing an auction mechanism is to easure the truthfulness, so as to encourage buyers/sellers to bid/ask their true valuations. 

Since allocation and pricing results are transparent in the blockchain, it exsits possible risk of exposing bids/asks of buyers/sellers. The bidding/asking information is their privacy, which may contain some commercial secrets. Adversaries could infer others' bids/asks through comparing the public auction results in multiple rounds by changing its bid/ask. This is known as ``inference attack'' \cite{zhu2014differentially} \cite{zhu2015differentially}. In order to prevent players from being trouble by inference attacks, differential privacy \cite{dwork2008differential} is a promising technology with strong thoeretical guaranatees that can be introduced in designing auction mechanisms. Even though several differential privacy-based auction mechanisms have been proposed in previous literature \cite{mcsherry2007mechanism} \cite{zhu2014differentially} \cite{jin2016enabling} \cite{lin2016bidguard} \cite{guo2021differential} \cite{ni2021differentially}, they are very different from the auction in our edge-thing systems. First, our auction is combinatorial because every buyer gives a total bid for a bundle of resources. Second, each edge nodes can only provide a limited amount of resource for each resource type. Third, the resource request of an IoT device can only be satisfied by one edge node, and the distance between them is constrained. Consider the real situation in edge-thing systems, we design a differentially private combinatorial double auction mechanism by exploring the exponential mechanism that selects the final pricing with a probability proportional to its corresponding revenue. On the premise of ensuring that the privacy is not exposed, it achieves individual rationality, budget balance, computational efficiency, and expected truthfulness at the same time. Our main contributions can be summarized as follows.
\begin{enumerate}
	\item We propose an novel edge-thing architecture based on blockchain technology to achieve complete decentralization and tempering-proof, in which the built-in smart contract acts as a central coordinator.
	\item To model a real edge-thing system, we formulate a combinatorial double auction model to achieve resources allocation between IoT devices and edge nodes.
	\item We introduce the exponential mechanism in differential privacy to our auction model so as to ensure privacy protection, and also achieve the expected truthfulness and approximately high revenue.
	\item We conduct extensive simulations to evaluate the performances of our proposed mechanisms. The simulation results verify our theoretical analysis.
\end{enumerate}

\textbf{Orgnizations: }In Sec. \uppercase\expandafter{\romannumeral2}, we survey the-state-of-art work. In Sec. \uppercase\expandafter{\romannumeral3}, we introduce the edge-thing system model and define our problem formally. In Sec. \uppercase\expandafter{\romannumeral4}, we introduce the differential privacy describe the mechasnism design in detial. In Sec. \uppercase\expandafter{\romannumeral5}, we give the proofs of related properties. Finally, we evaluate our mechanisms by numerical simulations in Sec.\uppercase\expandafter{\romannumeral6} and show the conclusions in Sec. \uppercase\expandafter{\romannumeral7}.

\section{Related Work}
In recent years, the related research on resources allocation has attract wide attention in academia. Auction theory has been used in a series of related areas, such as mobile crowdsensing \cite{yang2015incentive} \cite{guo2021reliable} and energy trading \cite{wang2012designing} \cite{yassine2019double}. In mobile edge computing environment, Sun \textit{et al.} \cite{sun2018double} proposed a double auction mechanism to allocate computing power between IoT devices and edge nodes, where IoT devices can purchase computing power from edge nodes. Habiba \textit{et al.} \cite{habiba2019reverse} put forward a reverse auction framework in mobile edge computing based on position, which aimed at maximizing the utility of edge servers. Peng \textit{et al.} \cite{peng2020multiattribute} designed a multiattribute-based double auction mechanism in vehicular edge computing, where the matching is determined by both price and non-price factors. However, a trusted auctioneer is essential to realize the resources allocation by auction mechanisms, especially for double auction. In P2P distributed edge network, there is no entity suitable acting as an auctioneer that can guarantee the security and reliability.

The emergence of blockchain technology has potentially solved this dilemma. It maintains a decentralized ledger, and can work as the auctioneer by combining smart contracts. Sun \textit{et al.} \cite{sun2020joint} revised their previous work in \cite{sun2018double} by introducing blockchain to achieve a trustworthy platform. Jiao \textit{et al.} \cite{jiao2019auction} proposed an auction-based market model for the allocation of computing resources between miners and edge servers. Ding \textit{et al.} \cite{ding2020incentive} \cite{ding2020pricing} attempted to build a secure blockchain-based IoT system by attracting more IoT devices to purchase computing power from edge servers and participate in the consensus process, where they adopted a multi-leader multi-follower Stackelberg game. Guo \textit{et al.} \cite{guo2020double} proposed a secure and efficient charging scheduling system based on DAG-blockchain and double auction mechanism. However, all the transaction models in these works are based on the allocation for a kind of resource. They did not consider the allocation of multiple resources. Moreover, they did not consider the potential risk of privacy disclosure.

Because of the public auction results, the sensitive information of participants is at risk of being exposed. To prevent the adversary from inferring players' sensitive information, Dwork \textit{et al.} \cite{dwork2008differential} founded the theory of differential privacy. McSherry \textit{et al.} \cite{mcsherry2007mechanism} first applied the differential privacy to auction mechanism and made a complete theoretical analysis. Chen \textit{et al.} \cite{chen2019differentially} combined the differential privacy with double spectrum auction design in order to maximize social welfare approximately. Guo \textit{et al.} \cite{guo2021differential} revised their work in \cite{guo2020double} by introducing differential privacy to avoid the leakage of bidding/asking information. Besides, differential privacy has been used in mechanism design of spectrum auction \cite{zhu2014differentially} \cite{zhu2015differentially}, smart grid \cite{xiang2016auc2reserve} \cite{li2019towards}, and mobile crowdsensing \cite{gao2019dpdt}. However, applying differential privacy to our combinatorial double auction model is very different from the existing work.

\section{Edge-Thing System Model}
In this section, we introduce the system model of the edge-thing architecture and how to integrate the blockchain as an effective technique to overcome the potential security threats in detail. Here, we consider the time can be discretized into time slots, denoted by $T=\{t_1,t_2,t_3,\cdots\}$, where each time slot is equal in length. The following discussion is within a time slot, including the combinatorial auction mechanism and consensus process. Finally, the objective function and problem definition can be formulated.
\subsection{System Description}
In the existing intelligent environment, there are a large number of IoT devices deployed in every corner of our lives, which undertake their own different tasks, such as traffic monitoring, health recording, navigation, and machine learning training. Because of their lightweight nature (limited resources) and delay sensitivity, these IoT devices can attempt to offload their tasks to adjacent edge nodes. In order to quantify the demand for different resources, we assume there are $k$ kinds of resources in our system, denoted by $\mathbb{R}=\{r_1,r_2,\cdots,r_k\}$, where each $r_i\in\mathbb{R}$ represents a certain kind of resource such as computation, memory, storage, or network bandwidth.

A certain number of edge nodes can form an intermediate layer between the more powerful cloud center and mobile IoT devices. In our system, there are $m$ IoT devices, denoted by $\mathbb{TD}=\{TD_1,\cdots,TD_i,\cdots,TD_m\}$. These IoT devices have limited computing power and storage space, thus not enough to achieve their goals. In order to upgrade the quality of service, the resources that are required by the IoT device $TD_i$ can be expressed as $\mathbb{D}_i=\{d_i^1,d_i^2,\cdots,d_i^k\}$, where $d_i^z\in[d_{min},d_{max}]$. Each $d_i^z\in\mathbb{D}_i$ indicates that IoT device $TD_i$ requires at least $d_i^z$ units of the resource $r_z$. Similarly, there are $n$ edge nodes, denoted by $\mathbb{EN}=\{EN_1,\cdots,EN_j,\cdots,EN_n\}$. These edge nodes are responsible for providing different resources to IoT devices. The resources that are provided by the edge node $EN_j$ can be expressed as $\mathbb{H}_j=\{h_j^1,h_j^2,\cdots,h_j^k\}$, where $h_j^z\in[h_{min},h_{max}]$. Each $h_j^z\in\mathbb{H}_j$ indicates that edge node $EN_j$ provides at most $h_j^z$ units of the resource $r_z$. Therefore, each resource-limited IoT device has to broadcast its resource request to the edge service provider in the hope of getting the resources it wants.

As mentioned earlier, in each time slot, edge nodes make a profit by selling resources and IoT devices complete their tasks by buying resources, which has created a double auction problem. In a double auction model, all players have to submit their requests to the auctioneer. There is an important question about who will assume the role of auctioneer. A natural idea is to let the cloud center be the auctioneer. However, this deviates from our original intention of getting rid of the cloud centers. There are several potential security threats when trying out a centralized cloud center, which can be summarized as follows.
\begin{enumerate}
	\item Vulnerability: the cloud center is attacked or damaged by malicious attackers or unexpected disasters. It will cause the single point of failure.
	\item Insecurity: the bidding/asking information submitted by players could be leaked or tampered with. It will cause data loss and privacy leakage.
	\item Unreliability: the cloud center is biased, and colludes with some nodes for their own benefits. It will cause the auction results to be unfair.
	\item Communication security and network delay: the cloud center is physically far away from IoT devices and edge nodes, which will cause potential security hazards and network delays in the transmission process.
\end{enumerate}

\begin{figure}[!t]
	\centering
	\includegraphics[width=\linewidth]{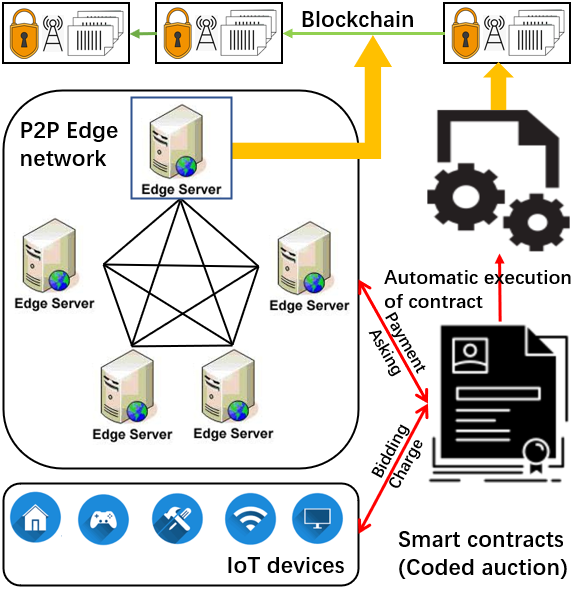}
	\centering
	\caption{The architecture of our edge-thing system based on blockchain and smart contract.}
	\label{fig1}
\end{figure}

In order to overcome the above drawbacks and achieve the decentralization, the blockchain and smart contract are used as ancillary techniques to prevent tampering and establish a credible system among unfamilar nodes without the third-party authority. The transaction between IoT devices and edge nodes are stored in the blockchain. Figure \ref{fig1} exhibits the architecture of our blockchain-enabled edge-thing system. Shown as Figure \ref{fig1}, IoT devices are light nodes that do not store the blockchain but participate in the transaction. Edge nodes are full nodes that store the complete blockchain and perform the consensus process to add new blocks to the blockchain. Moreover, a smart contract is deployed on the blockchain, which plays the role of auctioneer by implementing information interaction between IoT devices and edge nodes, and executing the predefined auction mechanism automatically. Such a system does not rely on a third-party authority to act the auctioneer, and also inherits the advantages of decentralization, temper resistance, and transparency in the blockchain.

\subsection{Combinatorial Auction Mechanism}
In order to simulate the real situation, we assume that each IoT device submitted its resource request in a bundled way, which formulates a combinatorial auction. For example, an IoT device needs to complete a task of training a deep learning model, thereby it wants to buy computation and memory from edge nodes. It is more reasonable to give a total bid according to its valuation of this task instead of bidding each resource separately. Furthermore, we find that this task can only be accomplished at one edge node. In other words, computing and memory resources must come from the same edge node, which increases the limitation of our model.

In a typical auction, there are three key roles, namely, the buyer, seller, and auctioneer. In our system, IoT devices are buyers, thus buyer set is $\mathbb{TD}$; edge nodes are sellers, thus seller set is $\mathbb{EN}$; and a smart contract is the auctioneer. In each time slot, the buyer requests a set of resources and gives the maximum price it is ready to pay to the edge node for buying these resources. For each buyer $TD_i\in\mathbb{TD}$, its bidding information can be denoted by $\mathcal{B}_i=(\mathbb{D}_i,b_i,dm_i)$ where the $b_i\in[v_{min},v_{max}]$ is the total bid (maximum buying price) to buy a bundle of resources $\mathbb{D}_i$, and the $dm_i$ is the maximum tolerant distance from the edge node providing resources to it. For each seller $EN_j\in\mathbb{EN}$, its asking information can be denoted by $\mathcal{A}_j=(\mathbb{H}_j,\vec{a}_j)$, where the $\vec{a}_j=(a_j^1,a_j^2,\cdots,a_j^k)$ is the asking vector where each $a_j^z\in\vec{a}_j$ is the unit ask (minimum selling price) per resource $r_z$. The bidding information of buyers and asking information of sellers are submitted to the auctioneer, therefore this auction can be defined as
\begin{equation}
	\Omega=\left(\left\{\mathcal{B}_i\right\}_{TD_i\in\mathbb{TD}},\left\{\mathcal{A}_j\right\}_{EN_j\in\mathbb{EN}}\right).
\end{equation}
Besides, for each buyer $TD_i\in\mathbb{TD}$, its valuation for obtaining the bundle of resources $\mathbb{D}_i$ is $v_i\in[v_{min},v_{max}]$. For each seller $EN_j\in\mathbb{EN}$, it has a cost vector $\vec{c}_j=(c_j^1,c_j^2,\cdots,c_j^k)$ where each $c_j^z\in[c_{min},c_{max}]$ is the unit cost per resource $r_z$.

In each time slot, once collecting the biding and asking information from players, the auctioneer will determine who are winning buyers and sellers, and how to allocate resources between them. The resource allocation is denoted by a binary matrix $\vec{X}_{m\times n}$, called ``allocation matrix''. For each $x_{ij}\in\vec{X}$, $x_{ij}=1$ if the resources requested by $TD_i$ are provided by $EN_j$ according to the result; otherwise $x_{ij}=0$. Besides, the auctioneer needs to determine the clearing price of each resource, which can be denoted by a price vector $\vec{p}=(p_1,p_2,\cdots,p_k)$. For each $p_z\in\vec{p}$, it is the unit price that buyers have to pay to get a unit of resources $r_z$.

\begin{rem}
	Here, we have $c_j^z\in[c_{min},c_{max}]$ for each resource $r_z\in\mathbb{R}$ and the price vector $\vec{p}\in[c_{min},c_{max}]^k$. For simplicity, we denoted by $\Theta=[c_{min},c_{max}]$ in the following description.
\end{rem}

\subsection{Problem Formulation}
According to the above definitions, we assume that the utility of each buyer $TD_i$ is denoted by $u_i^{TD}$ and the utility of each seller $EN_j$ is denoted by $u_j^{EN}$. After the auctioneer determines a clearing price vector $\vec{p}$ and its corresponding allocation matrix $\vec{X}$, the utilities of all losing players are equal to zero. Namely, we have $u_i^{TD}=0$ for each losing buyer $TD_i\in\mathbb{TD}$ if $\sum_{j=1}^{n}x_{ij}=0$ and $u_j^{EN}=0$ for each losing seller $EN_j\in\mathbb{EN}$ if $\sum_{i=1}^{m}x_{ij}=0$. The utility of each winning buyer is the difference between its valuation and payment toward its requested resources. In summary, for each buyer $TD_i\in\mathbb{TD}$, we have
\begin{equation}\label{eq2}
	u_i^{TD}=v_i-\sum_{j=1}^{n}x_{ij}\cdot\sum_{z=1}^{k}p_z\cdot d_i^z.
\end{equation}
The utility of each winning seller if the difference between the total payment from buyers and total cost. In summary, for each seller $EN_j\in\mathbb{EN}$, we have
\begin{equation}\label{eq3}
	u_j^{EN}=\sum_{z=1}^{k}(p_z-c_j^z)\cdot\sum_{i=1}^{m}x_{ij}\cdot d_i^z.
\end{equation}
Because the requested resources of a buyer must come from the same seller, there is a constraint that $\sum_{j=1}^{n}x_{ij}\leq1$ for each buyer $TD_i\in\mathbb{TD}$.

The result of an auction depends on its objective. In this system, we aim at maximizing the revenue of edge computing platform. The corresponding optimization problem can be summarized as to maximize the accumulated utility of all edge nodes, which is shown as the following problem:
\begin{align}
	\max\quad & \sum_{j=1}^{n}\left[\sum_{z=1}^{k}(p_z-a_j^z)\cdot\sum_{i=1}^{m}x_{ij}\cdot d_i^z\right]\label{eq4}\\
	s.\;t.\quad & \sum_{j=1}^{n}x_{ij}\leq1, \;\forall TD_i\in\mathbb{TD} \tag{\ref{eq4}{a}}\\
	& \sum_{i=1}^{m}x_{ij}\cdot d_i^z\leq h_j^z, \;\forall r_z\in\mathbb{R},\forall EN_j\in\mathbb{EN} \tag{\ref{eq4}{b}}\\
	& \sum_{j=1}^{n}x_{ij}\cdot \delta_{ij}\leq dm_i, \;\forall TD_i\in\mathbb{TD} \tag{\ref{eq4}{c}}\\
	& x_{ij}\in\{0,1\}, \;\forall TD_i\in\mathbb{TD},\forall EN_j\in\mathbb{EN}. \tag{\ref{eq4}{d}}
\end{align}
where $\delta_{ij}$ is the transmission distance between buyer $TD_i$ and seller $EN_j$. Constraint (4a) represents the many-to-one relationship from IoT devices to an edge node. Constraint (4b) states that the total consumption of each kind of resource $r_z$ cannot be larger than the maximum amount $h^z_j$ that can be provided by an edge node $EN_j$. Constraint (4c) implies that the distance between an IoT device $TD_i$ and the edge node that provides it with resources cannot be larger than the maximum distance allowed by this IoT device. This optimization problem can be classified as an integer linear programming problem, thus it is NP-hard.

\section{Mechanism Design}
In this section, we first introduce basic principles of designing an effective combinatorial auction mechanism. Due to the use of blockchain technology, the transactions in this system become transparent. That is to say, all auction results, including winners and clearing price, will be made public, which makes the system face the threat of inference attack. Therefore, we introduce the differential privacy into our mechanism design to avoid accidental disclosure of users' bidding/asking information.

\subsection{Design Rationales}
An effective auction mechanism has to satisfy the following four properties: individual rationality, budget balance, computational efficiency, and truthfulness.
\begin{defn}[Individual Rationality]
	An auction is individually rational if and only if the utilities of all players are non-negative. In our auction $\Omega$, we have $u_i^{TD}\geq 0$ for each buyer $TD_i\in\mathbb{TD}$ and $u_j^{EN}\geq 0$ for each seller $TN_j\in\mathbb{TN}$, where $u_i^{TD}$ and $u_j^{EN}$ are defined in (\ref{eq2}) and (\ref{eq3}).
\end{defn}
\begin{defn}[Budget Balance]
	An auction is budget balanced if and only if the auctioneer is profitable. In our auction $\Omega$, that is
	\begin{equation}
		\sum_{j=1}^{n}\left[\sum_{z=1}^{k}(p_z-a_j^z)\cdot\sum_{i=1}^{m}x_{ij}\cdot d_i^z\right]\geq 0.
	\end{equation}
\end{defn}
\begin{defn}[Computational Efficiency]
	The auction result can be obtained in polynomial time. 
\end{defn}
In an auction, players could manipulate their bids/asks in a strategical sense in order to win the auction. The truthfulness is a concept that encourages players in an auction to bid/ask according to their valuations/costs strictly. However in some cases, it is difficult to reach an exact truthfulness. Thus, we can consider an approximate truthfulness instead, called $\gamma$-truthfulness \cite{gupta2010differentially}, which ensures there is no one gaining more than $\gamma$ utility when bidding/asking truthfully.

\begin{defn}[$\gamma$-truthfulness]
	An auction is approximately truthful if and only if each player bids/asks truthfully is approximate to one of its dominant strategies. In our auction $\Omega$, for each buyer $TD_i\in\mathbb{TD}$, we have
	\begin{equation}
		\mathbb{E}\left[u_i^{TD}(v_i,\Omega_{-i})\right]\geq\mathbb{E}\left[u_i^{TD}(b_i,\Omega_{-i})\right]-\gamma
	\end{equation}
	where $\Omega_{-i}$ is other players' strategies except $TD_i$. For each seller $EN_j\in\mathbb{EN}$, we have
	\begin{equation}
		\mathbb{E}\left[u_j^{EN}(\vec{c}_j,\Omega_{-j})\right]\geq\mathbb{E}\left[u_j^{EN}(\vec{a}_j,\Omega_{-j})\right]-\gamma
	\end{equation}
	where $\Omega_{-j}$ is other players' strategies except $EN_j$.
\end{defn}

When we discuss the truthfulness in our auction, we assume that the resquested bundle $\mathbb{D}_i$ submitted by the buyer and the total resources $\mathbb{H}_j$ submitted by the seller are all believable because they can be monitored. Due to the truthfulness, no player is motivated to manipulate its strategy to gain more utility, which makes the strategic decision of players easier and guarantees a fair competitive environment.

\subsection{Differential Privacy}
The blockchain applied in our system can only ensure the security at the physical level, but it cannot prevent inference attacks. A curious player can infer other players' strategies by changing its own bid/ask in continuous auction rounds and analyzing the relavant auction results. With the help of other players' strategies, the attacker is able to make decisions in their favor and increase its benefits, thus undermining the fairness. To prevent this kind of threat, we choose to design a differentially private auction mechanism. Differential privacy is a technique that makes the attacker not distinguish between two neighboring inputs with high probability \cite{dwork2008differential}. Two datasets, $\vec{s}=(s_1,s_2,\cdots,s_i,\cdots)$ and $\vec{s}'=(s_1,s_2,\cdots,s_i',\cdots)$, are neighboring if and only if they have exactly one different element. For convenience, we denote by the bids of all buyers $\vec{b}=(b_1,b_2,\cdots,b_m)$ and the asks of all sellers $\vec{A}=(\vec{a}_1,\vec{a}_2,\cdots,\vec{a}_n)$. The definition of differential privacy is shown as follows.

\begin{defn}[Differential Privacy]\label{def5}
	We simplify our auction mechanism as a function $M(\cdot)$ that maps input bids $\vec{b}$ and input asks $\vec{A}$ to a clearing price $\vec{p}$. The mechanism $M(\cdot)$ gives $\varepsilon$-differential privacy if and only if, for any two neighboring inputs $(\vec{b},\vec{A})$ and $\{(\vec{b}',\vec{A})$ or $(\vec{b},\vec{A}')\}$, we have
	\begin{equation}
		\Pr[M(\vec{b},\vec{A})=\vec{p}]\leq\exp(\varepsilon)\cdot\Pr[(M(\vec{b}',\vec{A})=\vec{p}]
	\end{equation}
	\begin{equation}
		\Pr[M(\vec{b},\vec{A})=\vec{p}]\leq\exp(\varepsilon)\cdot\Pr[(M(\vec{b},\vec{A}')=\vec{p}]
	\end{equation}
	where the constant $\varepsilon$ is privacy budget.
\end{defn}

The privacy budget is a parameter for controlling the degree of privacy protection that a mechanism gives. Generally speaking, the smaller the privacy budget, the stronger the privacy protection. By introducing the differential privacy into our auction mechanism, the change of a player's bid/ask will not significantly affect the final clearing price. Thus, it prevents us from inference attacks through manipulating strategies and analyzing auction results.

Exponential mechanism \cite{dwork2008differential} is one of the mainstream methods to realize practical differential privacy. It depends on an ``score'' function $Q(\cdot)$ that maps input/output pairs to scores. The score function in our auction can be defined as $Q((\vec{b},\vec{A}),\vec{p})$, where a candidate output is more likely to be chosen if its score is higher. Thus, the exponential mechanism can be defined as follows.

\begin{defn}[Exponential Mechanism]
	Given an output $\vec{p}\in\Theta^k$, a score function $Q(\cdot)$, and a privacy budget $\varepsilon$, the exponential mechanism $M(\vec{b},\vec{A})$ selects $\vec{p}$ as its output with a probability that is proportional to its score $\varepsilon Q((\vec{b},\vec{A}),\vec{p})$. Thus, we have
	\begin{equation}
		\Pr[M(\vec{b},\vec{A})=\vec{p}]\propto\exp\left(\frac{\varepsilon Q((\vec{b},\vec{A}),\vec{p})}{2\Delta Q}\right)
	\end{equation}
	where $\Delta Q$ is the sensitivity of score function $Q(\cdot)$. That is the largest difference of their sorces for any two neighboring inputs $(\vec{b},\vec{A})$ and $\{(\vec{b}',\vec{A})$ or $(\vec{b},\vec{A}')\}$, which can be denoted by $\Delta Q=\max_{\vec{p}}\max_{(\vec{b},\vec{A}),(\vec{b}',\vec{A}')}\{|Q((\vec{b},\vec{A}),\vec{p})-Q((\vec{b}',\vec{A}),\vec{p})|,|Q((\vec{b},\vec{A}),\vec{p})-Q((\vec{b},\vec{A}'),\vec{p})|\}$.
\end{defn}

\subsection{Algorithm Design and Description}
The design goal of our auction mechanism is to maximize the revenue of edge computing platform approximately, but achieve $\varepsilon$-differential privacy, $\gamma$-truthfulness, individual rationality, budget balance, and computational efficiency at the same time. The mechanism can be divided into three stages, winning candidate determination, assigment, and pricing. The procedure is shown in Algorithm \ref{alg1}.

\begin{algorithm}[!t]
	\caption{\text{DPAM}}\label{alg1}
	\begin{algorithmic}[1]
		\renewcommand{\algorithmicrequire}{\textbf{Input:}}
		\renewcommand{\algorithmicensure}{\textbf{Output:}}
		\REQUIRE $(\{\mathcal{B}_i\}_{TD_i\in\mathbb{TD}},\{\mathcal{A}_j\}_{EN_j\in\mathbb{EN}})$, $\varepsilon$, $\Theta$
		\ENSURE $\vec{X}_{\vec{p}}$, $\vec{p}$
		\STATE Initialize $\Delta R=\sum_{j=1}^{n}(c_{max}-c_{min})\cdot\sum_{z=1}^{k}h_j^z$
		\FOR {each $\vec{p}\in\Theta^k$}
		\STATE \textit{// Winning candidate determination}
		\STATE Initialize $x_{ij}=0$ for each $x_{ij}\in\vec{X}_{\vec{p}}$
		\STATE Initialize $\mathbb{TD}_c\leftarrow\emptyset$
		\FOR {each $TD_i\in\mathbb{TD}$}
		\IF {$\sum_{z=1}^{k}p_z\cdot d_i^z\leq b_i$}
		\STATE $\mathbb{TD}_c\leftarrow\mathbb{TD}_c\cup\{TD_i\}$
		\ENDIF
		\ENDFOR
		\STATE Sort the $\mathbb{TD}_c$ s.t. $\sum_{z=1}^{k}d_1^z\geq\sum_{z=1}^{k}d_2^z\geq\cdots$
		\STATE \textit{// Assignment}
		\STATE Initialize $\{{h_j^1}',{h_j^2}',\cdots,{h_j^k}'\}$ where ${h_j^z}'=h_j^z\in\mathbb{H}_j$
		\FOR {each $TD_i\in\mathbb{TD}_c$}
		\STATE Initialize $\mathbb{EN}_{c,i}\leftarrow\emptyset$
		\FOR {each $EN_j\in\mathbb{EN}$}
		\IF {${h_j^z}'\geq d_i^z$ for each $r_z\in\mathbb{R}$, $\delta_{ij}\leq dm_i$, and $\sum_{z=1}^{k}(p_z-a_j^z)\cdot d_i^z\geq 0$}
		\STATE $\mathbb{EN}_{c,i}\leftarrow\mathbb{EN}_{c,i}\cup\{EN_j\}$
		\ENDIF
		\ENDFOR
		\IF {$\mathbb{EN}_{c,i}\neq\emptyset$}
		\STATE $EN_{j^*}\leftarrow\arg\min_{EN_j\in\mathbb{EN}_{c,i}}\{\delta_{ij}\}$
		\FOR {each $r_z\in\mathbb{R}$}
		\STATE ${h_{j^*}^z}'\leftarrow{h_{j^*}^z}'-d_i^z$
		\ENDFOR
		\STATE $x_{ij^*}\leftarrow 1$
		\ENDIF
		\ENDFOR
		\STATE $R((\vec{b},\vec{A}),\vec{p})=\sum_{j=1}^{n}[\sum_{z=1}^{k}(p_z-a_j^z)\sum_{i=1}^{m}x_{ij} d_i^z]$
		\ENDFOR
		\STATE \textit{// Pricing}
		\STATE Select a $\vec{p}\in\Theta^k$ according to the selection distribution: 	$\Pr[M(\vec{b},\vec{A})=\vec{p}]=\frac{\exp\left(\frac{\varepsilon R((\vec{b},\vec{A}),\vec{p})}{2\Delta R}\right)}{\sum_{\vec{p}'\in\Theta^k}\exp\left(\frac{\varepsilon R((\vec{b},\vec{A}),\vec{p}')}{2\Delta R}\right)}$
		\RETURN $\vec{X}_{\vec{p}}$, $\vec{p}$
	\end{algorithmic}
\end{algorithm}

In the winning candidate determination, we first select a subset of $\mathbb{TD}$ as winning buyer candidates, which is denoted by $\mathbb{TD}_c\subseteq\mathbb{TD}$. Given a price vector $\vec{p}\in\Theta^k$, we have $TD_i\in\mathbb{TD}_c$ if and only if it satisfies
\begin{equation}
	\sum_{z=1}^{k}p_z\cdot d_i^z\leq b_i.
\end{equation}
Then, we sort the set of winning buyer candidates $\mathbb{TD}_c$ in a descending order according to their amount of requested resources. For each buyer $TD_i\in\mathbb{TD}_c$, its amount of requested resources is defined as $\sum_{z=1}^{k}d_i^z$. Thus, we sort $\mathbb{TD}_c=\{TD_1,TD_2,\cdots\}$ where they satisfy $\sum_{z=1}^{k}d_1^z\geq\sum_{z=1}^{k}d_2^z\geq\cdots$ definitely. Next, for each buyer $TD_i\in\mathbb{TD}_c$, we need to determine its winning seller candidates, which is denoted by $\mathbb{EN}_{c,i}\in\mathbb{EN}$. We have $EN_j\in\mathbb{EN}_{c,i}$ if and only if it satisfies three conditions.
\begin{enumerate}
	\item Its remaining resources $\mathbb{H}_j'$ are sufficient. In other words, we have ${h_j^z}'\geq d_i^z$ for each $r_z\in\mathbb{R}$.
	\item Its distance from $TD_i$ is close enoguh. Thus, we have $\delta_{ij}\leq dm_i$.
	\item It is profitable by providing resources to buyer $TD_i$. Here, we have $\sum_{z=1}^{k}(p_z-a_z)\cdot d_i^z\geq 0$.
\end{enumerate}
Conditon (1) and (2) is obvious. If Condition (3) cannot be satisfied, providing resources for $TD_i$ by $EN_j$ ($x_{ij}=1$) will lead to a decrease in the objective value.

Given a buyer $TD_i\in\mathbb{TD}_c$, we can get its winning seller candidates $\mathbb{EN}_{c,i}$. If $|\mathbb{EN}_{c,i}|\geq 1$, how can we select the best one to provide resources? In the assignment stage, we can think about it in two directions. The first strategy is to consider load balancing, and we try our best to arrange the edge node with more idle resources to provide service. The second strategy is to consider saving network bandwidth, and we try our best to arrange the edge node that is closest to the target buyer $TD_i$. Because an edge node can provide a variety of resources, how to quantify "idle resources" is difficult. Thus, we use the second strategy here, where we select an $EN_{j^*}\in\mathbb{EN}_{c,i}$ that satisfies
\begin{equation}
	EN_{j^*}=\arg\min_{EN_j\in\mathbb{EN}_{c,i}}\{\delta_{ij}\}
\end{equation}
to provide resources to the buyer $TD_i$.

From the above process, we can obtain winning buyers and winning sellers, and their corresponding allocation matrix $\vec{X}$ given a price vector $\vec{p}$. The next pricing stage is to determine which price vector $\vec{p}\in\Theta^k$ we select. This pricing process comes from both the uniform pricing \cite{son2004short} and the exponential mechanism. Given a price vector $\vec{p}$, it generates an allocation matrix $\vec{X}_{\vec{p}}$, then we can calculate the corresponding revenue of the platform as (\ref{eq4}), denoted by
\begin{equation}\label{eq13}
	R((\vec{b},\vec{A}),\vec{p})=\sum_{j=1}^{n}\left[\sum_{z=1}^{k}(p_z-a_j^z)\cdot\sum_{i=1}^{m}x_{ij}\cdot d_i^z\right].
\end{equation}
We make this platform revenue as the score of price $\vec{p}$. The sensitivity of score function $R(\cdot)$ can be formulated as
\begin{equation}\label{eq14}
	\Delta R=\sum_{j=1}^{n}(c_{max}-c_{min})\cdot\sum_{z=1}^{k}h_j^z
\end{equation}
since $p_z-a_j^z\leq c_{max}-c_{min}$ and $\sum_{i=1}^{m}x_{ij}\cdot d_i^z\leq h_j^z$. Then, we repeat the above process to calculate platform revenues under all possible price $\vec{p}\in\Theta^k$. To determine the final pricing, we define the probability distribution of price vectors as follows.
\begin{equation}\label{eq15}
	\Pr[M(\vec{b},\vec{A})=\vec{p}]=\frac{\exp\left(\frac{\varepsilon R((\vec{b},\vec{A}),\vec{p})}{2\Delta R}\right)}{\sum_{\vec{p}'\in\Theta^k}\exp\left(\frac{\varepsilon R((\vec{b},\vec{A}),\vec{p}')}{2\Delta R}\right)}
\end{equation}
where $R((\vec{b},\vec{A}),\vec{p})$ is defined in (\ref{eq13}) and $\Delta R$ is defined in (\ref{eq14}). Given all possible price $\vec{p}\in\Theta^k$ and their scores, in the pricing stage, it randomly select a price vector $\vec{p}$ with the probability $\Pr[M(\vec{b},\vec{A})=\vec{p}]$ shown as (\ref{eq15}).

\section{Theoretical Analysis}
In this section, we describe the theoretical analysis of how our proposed mechanism DPAM, shown as Algorithm \ref{alg1} satisfies desirable properties.
\begin{thm}\label{thm1}
	The DPAM achieves $\varepsilon$-differential privacy.
\end{thm}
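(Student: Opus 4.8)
The plan is to recognize the pricing stage of DPAM as a direct instantiation of the exponential mechanism, with the revenue $R((\vec{b},\vec{A}),\vec{p})$ of (\ref{eq13}) playing the role of the score function and $\Delta R$ of (\ref{eq14}) its sensitivity, and then to reprove the classical exponential-mechanism privacy bound by controlling the ratio of output probabilities under neighboring inputs. Since the allocation matrix $\vec{X}_{\vec{p}}$ is fully determined by $\vec{p}$ together with the input, once $\vec{p}$ is fixed the only randomness is the selection in (\ref{eq15}); hence it suffices to bound $\Pr[M(\vec{b},\vec{A})=\vec{p}]\big/\Pr[M(\vec{b}',\vec{A})=\vec{p}]$ by $\exp(\varepsilon)$ for every output $\vec{p}$ and every pair of neighboring inputs, which is exactly the two-sided inequality of Definition \ref{def5}.

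Concretely, I would fix $\vec{p}\in\Theta^k$ and a buyer-neighboring pair $(\vec{b},\vec{A})$, $(\vec{b}',\vec{A})$, then use (\ref{eq15}) to write the ratio of selection probabilities as the product of a numerator ratio $\exp(\varepsilon(R((\vec{b},\vec{A}),\vec{p})-R((\vec{b}',\vec{A}),\vec{p}))/(2\Delta R))$ and an inverse-normalizer ratio $\sum_{\vec{p}'}\exp(\varepsilon R((\vec{b}',\vec{A}),\vec{p}')/(2\Delta R))\big/\sum_{\vec{p}'}\exp(\varepsilon R((\vec{b},\vec{A}),\vec{p}')/(2\Delta R))$. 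The first factor is at most $\exp(\varepsilon/2)$ because $|R((\vec{b},\vec{A}),\vec{p})-R((\vec{b}',\vec{A}),\vec{p})|\le\Delta R$. For the second factor I would apply the per-term bound $R((\vec{b}',\vec{A}),\vec{p}')\le R((\vec{b},\vec{A}),\vec{p}')+\Delta R$ inside every summand, pull the common factor $\exp(\varepsilon/2)$ out of the numerator sum, and cancel the two remaining sums to again get $\exp(\varepsilon/2)$. Multiplying the two factors yields $\exp(\varepsilon)$. The seller-neighboring case $(\vec{b},\vec{A})$, $(\vec{b},\vec{A}')$ is word-for-word identical, since the argument uses only the sensitivity bound and is agnostic to which coordinate of the input was perturbed.

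The step I expect to demand the most care is not this algebra but justifying that the $\Delta R$ of (\ref{eq14}) is a legitimate upper bound on the true global sensitivity, i.e. that $|R((\vec{b},\vec{A}),\vec{p})-R((\vec{b}',\vec{A}),\vec{p})|\le\Delta R$ holds uniformly. The subtlety is that perturbing a single bid $b_i$ (or ask $\vec{a}_j$) can change membership in $\mathbb{TD}_c$ and then cascade through the greedy assignment, so the induced change in $\vec{X}_{\vec{p}}$, and hence in $R$, is awkward to track term by term. The clean route I would take is a range bound: establish $0\le R((\vec{b},\vec{A}),\vec{p})\le\Delta R$ for every input and every $\vec{p}$, with non-negativity coming from assignment Condition (3) (which forces $\sum_{z}(p_z-a_j^z)d_i^z\ge0$ on every served pair) and the upper bound coming from $p_z-a_j^z\le c_{max}-c_{min}$ together with the capacity constraint $\sum_i x_{ij}d_i^z\le h_j^z$ enforced in the assignment loop. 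Since both $R((\vec{b},\vec{A}),\vec{p})$ and $R((\vec{b}',\vec{A}),\vec{p})$ then lie in $[0,\Delta R]$, their difference is at most $\Delta R$, which validates the sensitivity used in (\ref{eq15}) without any case analysis on how the allocation shifts.
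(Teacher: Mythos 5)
Your proposal is correct and follows essentially the same route as the paper: both factor the ratio of selection probabilities from (\ref{eq15}) into a numerator term and an inverse-normalizer term, bound each by $\exp(\varepsilon/2)$ via the sensitivity $\Delta R$, and handle the seller-neighboring case symmetrically. Your extra care in justifying $\Delta R$ through the range bound $0\le R((\vec{b},\vec{A}),\vec{p})\le\Delta R$ (non-negativity from assignment Condition (3), the upper bound from $p_z-a_j^z\le c_{max}-c_{min}$ and the capacity constraint) is a welcome tightening of a point the paper only asserts in passing when defining (\ref{eq14}), but it does not change the argument's structure.
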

\begin{proof}
	Given two neighboring inputs $(\vec{b},\vec{A})$ and $(\vec{b}',\vec{A})$, the mechanism $M$ randomly select a clearing price $\vec{p}$ from $\Theta^k$. Thus, the probability ratio of their corresponding probability selected by the $M$ is shown as follows.
	\begin{align}
		&\frac{\Pr[M(\vec{b},\vec{A})=\vec{p}]}{\Pr[M(\vec{b}',\vec{A})=\vec{p}]}\nonumber\\
		&=\frac{\exp\left(\frac{\varepsilon R((\vec{b},\vec{A}),\vec{p})}{2\Delta R}\right)}{\sum_{\vec{p}'\in\Theta^k}\exp\left(\frac{\varepsilon R((\vec{b},\vec{A}),\vec{p}')}{2\Delta R}\right)}/\frac{\exp\left(\frac{\varepsilon R((\vec{b}',\vec{A}),\vec{p})}{2\Delta R}\right)}{\sum_{\vec{p}'\in\Theta^k}\exp\left(\frac{\varepsilon R((\vec{b}',\vec{A}),\vec{p}')}{2\Delta R}\right)}\nonumber\\
		&=\exp\left(\frac{\varepsilon[R((\vec{b},\vec{A}),\vec{p})-R((\vec{b},\vec{A}),\vec{p})]}{2\Delta R}\right)\cdot\nonumber\\
		&\quad\frac{\sum_{\vec{p}'\in\Theta^k}\exp\left(\frac{\varepsilon R((\vec{b}',\vec{A}),\vec{p}')}{2\Delta R}\right)}{\sum_{\vec{p}'\in\Theta^k}\exp\left(\frac{\varepsilon R((\vec{b},\vec{A}),\vec{p}')}{2\Delta R}\right)}\nonumber\\
		&\leq\exp\left(\frac{\varepsilon}{2}\right)\cdot\frac{\sum_{\vec{p}'\in\Theta^k}\exp\left(\frac{\varepsilon[R((\vec{b},\vec{A}),\vec{p}')+\Delta R]}{2\Delta R}\right)}{\sum_{\vec{p}'\in\Theta^k}\exp\left(\frac{\varepsilon R((\vec{b},\vec{A}),\vec{p}')}{2\Delta R}\right)}\nonumber\\
		&\leq\exp\left(\frac{\varepsilon}{2}\right)\cdot\exp\left(\frac{\varepsilon}{2}\right)\cdot\frac{\sum_{\vec{p}'\in\Theta^k}\exp\left(\frac{\varepsilon R((\vec{b},\vec{A}),\vec{p}')}{2\Delta R}\right)}{\sum_{\vec{p}'\in\Theta^k}\exp\left(\frac{\varepsilon R((\vec{b},\vec{A}),\vec{p}')}{2\Delta R}\right)}\nonumber\\
		&=\exp(\varepsilon).\nonumber
	\end{align}
	By symmetry, we have $\Pr[M(\vec{b},\vec{A})=\vec{p}]/\Pr[M(\vec{b}',\vec{A})=\vec{p}]\geq\exp(-\varepsilon)$. According to Definition \ref{def5}, the DPAM is $\varepsilon$-differentially private to buyers.
	
	Given two neighboring inputs $(\vec{b},\vec{A})$ and $(\vec{b},\vec{A}')$, by similar induction procedure as buyers, we have
	\begin{equation*}
		\Pr[M(\vec{b},\vec{A})=\vec{p}]/\Pr[M(\vec{b},\vec{A}')=\vec{p}]\leq\exp(\varepsilon).
	\end{equation*}
	Thus, the DPAM is $\varepsilon$-differentially private to sellers, and Theorem \ref{thm1} has been proven.
\end{proof}

To achieve the $\gamma$-truthfulness eventually, we first introduce the following two lemmas as a foreshadowing.
\begin{lem}\label{lem1}
	Given a clearing price $\vec{p}\in\Theta^k$, for each buyer $TD_i\in\mathbb{TD}$, the DPAM achieves
	\begin{equation}
		u_i^{TD}((v_i,\Omega_{-i}),\vec{p})\geq u_i^{TD}((b_i,\Omega_{-i}),\vec{p}).
	\end{equation}
\end{lem}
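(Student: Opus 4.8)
The plan is to hold the clearing price $\vec{p}$ and the other players' strategies $\Omega_{-i}$ fixed, and to track how buyer $TD_i$'s \emph{submitted} bid influences its own utility. The key structural observation is that, at a fixed price, the payment of $TD_i$ upon winning equals $\sum_{z=1}^{k} p_z d_i^z$, a quantity determined entirely by $\vec{p}$ and the (monitored, hence truthful) bundle $\mathbb{D}_i$, and therefore \emph{independent of the bid}. Consequently the bid enters Algorithm \ref{alg1} only through the candidacy test $\sum_{z=1}^{k} p_z d_i^z \le b_i$ in the winning-candidate-determination stage. Recalling from the problem formulation that a losing buyer has zero utility, I can write the utility compactly as $u_i^{TD} = v_i - \sum_{z=1}^{k} p_z d_i^z$ when $TD_i$ is allocated and $u_i^{TD}=0$ otherwise.

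Next I would argue that, given $\vec{p}$ and $\Omega_{-i}$, whether $TD_i$ is actually allocated \emph{conditioned on being a candidate} is a fixed fact, call it $W$, independent of the bid. This is because the candidate set $\mathbb{TD}_c$ is sorted by requested-resource volume $\sum_{z=1}^{k} d_i^z$ rather than by bid, so $TD_i$'s rank is unchanged; and every buyer processed before $TD_i$ in this order is oblivious to $TD_i$, so the residual capacities $\{{h_j^z}'\}$ seen when $TD_i$ is processed, together with the distance and seller-profitability tests applied in the assignment stage, are identical in both scenarios. Hence $W$ depends only on $\vec{p}$ and $\Omega_{-i}$, and the \emph{only} effect of the bid is to decide whether $TD_i$ becomes a candidate at all.

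Writing $\mathrm{Pay}_i := \sum_{z=1}^{k} p_z d_i^z$, I would finish with a short case analysis comparing the truthful bid $v_i$ against an arbitrary bid $b_i$, organized by the position of $\mathrm{Pay}_i$ relative to $v_i$ and $b_i$. If $b_i \ge v_i$, the outcomes differ only when $v_i < \mathrm{Pay}_i \le b_i$: there the truthful bid makes $TD_i$ a non-candidate (utility $0$), while the strategic bid, if $W$ holds, yields $v_i - \mathrm{Pay}_i < 0$, so truthful is weakly better. If $b_i < v_i$, the outcomes differ only when $b_i < \mathrm{Pay}_i \le v_i$: there the truthful bid yields $v_i - \mathrm{Pay}_i \ge 0$ when $W$ holds and $0$ otherwise, while the strategic bid gives $0$, so truthful is again weakly better. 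In every remaining regime both bids induce the same candidacy status and hence identical utility, establishing $u_i^{TD}((v_i,\Omega_{-i}),\vec{p}) \ge u_i^{TD}((b_i,\Omega_{-i}),\vec{p})$.

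The step I expect to be the main obstacle is justifying that the allocation event $W$ is bid-independent, since this is precisely what separates the present single-price claim from the full $\gamma$-truthfulness argument that later averages over $\vec{p}$. It relies on two facts specific to the algorithm: the sort key is requested-resource volume, not bid, so $TD_i$'s processing position (under any fixed tie-breaking) is determined independently of $b_i$; and the greedy assignment processes buyers strictly in that order, so the capacity seen by $TD_i$ is a function only of the candidates ahead of it, none of which depend on $TD_i$'s bid. Once this independence is pinned down, the remaining case analysis is routine arithmetic on $v_i$, $b_i$, and $\mathrm{Pay}_i$.
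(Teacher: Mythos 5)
Your proposal is correct and follows essentially the same route as the paper: fix $\vec{p}$, observe that the payment $\sum_{z=1}^{k}p_z d_i^z$ is bid-independent so the bid only affects candidacy, and finish with a case analysis on where this payment falls relative to $v_i$ and $b_i$. You are in fact somewhat more careful than the paper, which simply asserts that the auction result for $TD_i$ is unchanged when its candidacy status is unchanged, whereas you explicitly justify this via the volume-based (bid-independent) sort order and the resulting invariance of the residual capacities seen by $TD_i$.
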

\begin{proof}
	The $TD_i\in\mathbb{TD}_c$ if it bids truthfully. There are two sub-cases we need to concern:
	\begin{itemize}
		\item $b_i>v_i$: The $TD_i$ will be in $\mathbb{TD}_c$ as well. According to the winner candidate determination and assignment, the auction result to the $TD_i$ will not change. Thus, we have $u_i^{TD}((v_i,\Omega_{-i}),\vec{p})=u_i^{TD}((b_i,\Omega_{-i}),\vec{p})$.
		\item $b_i<v_i$: If $\sum_{z=1}^{k}p_z\cdot d_i^z\leq b_i$ can be satisfied, the $TD_i$ will be in $\mathbb{TD}_c$ as well. Thus, we have $u_i^{TD}((v_i,\Omega_{-i}),\vec{p})=u_i^{TD}((b_i,\Omega_{-i}),\vec{p})$; Otherwise, the $TD_i$ will be not in $\mathbb{TD}_c$, which loses the auction definitely. Thus, we have $u_i^{TD}((v_i,\Omega_{-i}),\vec{p})\geq u_i^{TD}((b_i,\Omega_{-i}),\vec{p})=0$.
	\end{itemize}
	
	The $TD_i\notin\mathbb{TD}_c$ if it bids truthfully. There are two sub-cases we need to concern:
	\begin{itemize}
		\item $b_i>v_i$: If $\sum_{z=1}^{k}p_z\cdot d_i^z\leq b_i$ can be satisfied, the $TD_i$ will be in $\mathbb{TD}_c$. If it can be assigned an edge node in the assignment stage, its utilty will be $u_i^{TD}((b_i,\Omega_{-i}),\vec{p})=v_i-\sum_{z=1}^{k}p_z\cdot d_i^z<0=u_i^{TD}((v_i,\Omega_{-i}),\vec{p})$.
		\item $b_i<v_i$: The $TD_i$ will be not in $\mathbb{TD}_c$ as well, which loses the auction definitely. Thus, we have $u_i^{TD}((v_i,\Omega_{-i}),\vec{p})=u_i^{TD}((b_i,\Omega_{-i}),\vec{p})=0$.
	\end{itemize}

	From the above, we always have $u_i^{TD}((v_i,\Omega_{-i}),\vec{p})\geq u_i^{TD}((b_i,\Omega_{-i}),\vec{p})$, and Lemma \ref{lem1} has been proven.
\end{proof}

\begin{lem}\label{lem2}
	Given a clearing price $\vec{p}\in\Theta^k$, for each buyer $EN_j\in\mathbb{EN}$, the DPAM achieves
	\begin{equation}
		u_j^{EN}((\vec{c}_j,\Omega_{-j}),\vec{p})\geq u_j^{EN}((\vec{a}_j,\Omega_{-j}),\vec{p}).
	\end{equation}
\end{lem}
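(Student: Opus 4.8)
The plan is to mirror the argument of Lemma \ref{lem1}, transposed to the seller side. First I would isolate exactly how the reported ask $\vec{a}_j$ enters Algorithm \ref{alg1}: for a fixed price $\vec{p}$ and fixed $\Omega_{-j}$, the ask influences the run only through Condition (3) of the assignment stage, i.e. through the eligibility predicate $\sum_{z=1}^{k}(p_z-a_j^z)\cdot d_i^z\geq 0$ that decides whether $EN_j\in\mathbb{EN}_{c,i}$. The resource test (Condition 1), the distance test (Condition 2), and the closest-seller tie-break $EN_{j^*}=\arg\min_{EN_j\in\mathbb{EN}_{c,i}}\{\delta_{ij}\}$ are all independent of $\vec{a}_j$. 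Crucially, the realized utility $u_j^{EN}$ is always evaluated with the true cost $\vec{c}_j$ as in (\ref{eq3}), never with the reported ask.

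The key structural observation is that truthful reporting makes eligibility coincide with profitability. When $\vec{a}_j=\vec{c}_j$, Condition (3) becomes $\sum_{z=1}^{k}(p_z-c_j^z)\cdot d_i^z\geq 0$, which is exactly the sign condition of the per-buyer contribution $\sum_{z=1}^{k}(p_z-c_j^z)\cdot d_i^z$ to $u_j^{EN}$. Hence, under truthful asking, every buyer that $EN_j$ can possibly be assigned contributes a non-negative term, so $u_j^{EN}((\vec{c}_j,\Omega_{-j}),\vec{p})\geq 0$, and it equals precisely the sum of non-negative contributions over the buyers it wins.

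Next I would run a case analysis on the direction of the deviation, parallel to the two sub-cases in Lemma \ref{lem1}. If the deviation under-asks on a buyer $TD_i$ (so that $\sum_{z=1}^{k}(p_z-a_j^z)\cdot d_i^z\geq 0$ while $\sum_{z=1}^{k}(p_z-c_j^z)\cdot d_i^z<0$), then being assigned $TD_i$ injects a strictly negative term into $u_j^{EN}$, whereas truthful reporting would have excluded $TD_i$ from $\mathbb{EN}_{c,i}$; such an assignment can only lower utility. If the deviation over-asks, making $EN_j$ fail Condition (3) for a buyer it would have served truthfully, then $EN_j$ merely forgoes a non-negative contribution. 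In both directions the deviating utility cannot exceed the truthful value.

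The main obstacle is the coupling introduced by the greedy, buyer-by-buyer resource consumption. Changing $EN_j$'s eligibility for one buyer changes whether its remaining resources $\{{h_j^z}'\}$ are decremented at that step, which in turn alters Condition (1), and hence the eventual assignment, for later buyers in the sorted order $\mathbb{TD}_c$. Thus the per-buyer sign argument does not aggregate automatically, and a rigorous treatment would compare the truthful and deviating executions step by step along the fixed processing order, showing that any buyer contributing positively to $u_j^{EN}$ under the deviation is also won under truthful reporting with an at-least-as-large non-negative contribution, while every buyer that truthful reporting ``loses'' relative to the deviation contributes at most zero. Establishing this monotone coupling across the greedy schedule is the delicate part; once it is in place, summing the per-buyer comparisons yields $u_j^{EN}((\vec{c}_j,\Omega_{-j}),\vec{p})\geq u_j^{EN}((\vec{a}_j,\Omega_{-j}),\vec{p})$, completing the proof of Lemma \ref{lem2}.
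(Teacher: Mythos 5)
Your approach is the same as the paper's: fix $\vec{p}$ and $\Omega_{-j}$, observe that the reported ask $\vec{a}_j$ enters the run only through the profitability test $\sum_{z=1}^{k}(p_z-a_j^z)\cdot d_i^z\geq 0$ in the assignment stage while the realized utility is always computed with the true cost $\vec{c}_j$, and then split into over-asking and under-asking cases, arguing that over-asking only forfeits non-negative contributions and under-asking can only inject negative ones. Up to that point your argument and the paper's coincide.

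The difference is that you explicitly flag, and then leave open, the coupling through the greedy resource consumption --- and that is a genuine gap, in your write-up and, it should be said, in the paper's own proof as well. The paper simply asserts in the over-asking case that $x_{ij}\geq\bar{x}_{ij}$ for every buyer, i.e., that the deviating allocation is pointwise dominated by the truthful one. That assertion is exactly what your ``monotone coupling'' would have to establish, and it is not automatic: if over-asking causes $EN_j$ to fail Condition (3) for an early buyer $TD_i$ in the sorted order whose true margin $\sum_{z=1}^{k}(p_z-c_j^z)\cdot d_i^z$ is small but positive, then ${h_j^z}'$ is not decremented at that step, and $EN_j$ may subsequently pass Condition (1) for a later buyer with a much larger true margin that it could not have served truthfully. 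In that scenario the deviating utility exceeds the truthful one, so the per-buyer sign comparison does not aggregate and the claimed inequality is threatened. Your proposal is therefore incomplete exactly where you say it is; to close it (or the paper's version) one would need either a structural assumption ruling out such capacity-mediated swaps, or an argument bounding the gain from freed capacity by the forfeited non-negative contributions. Neither you nor the paper supplies this step.
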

\begin{proof}
	First, ``$\vec{a}_j>\vec{c}_j$'' implies there is at least one element in these vectors satisfying $a_j^{z^*}>c_j^{z^*}$ and others satisfy $a_j^z\geq c_j^z$ for each $r_z\in\mathbb{R}\backslash\{r_{z^*}\}$. Second, we denoted by $x_{ij}\in\vec{X}$ the allocation when a seller asks truthfully and $\bar{x}_{ij}\in\bar{\vec{X}}$ the allocation when a seller asks untruthfully. 
	
	Consider the seller $EN_j\in\mathbb{EN}$, there are two sub-cases we need to concern:
	\begin{itemize}
		\item $\vec{a}_j>\vec{c}_j$: When $x_{ij}=1$, the auction result will be $\bar{x}_{ij}=1$ as well if $\sum_{z=1}^{k}(p_z-a_z)\cdot d_{i}^z\geq 0$ can be satisfied; otherwise $\bar{x}_{ij}=0$. Thus, we have $u_j^{EN}((\vec{c}_j,\Omega_{-j}),\vec{p})\geq u_j^{EN}((\vec{a}_j,\Omega_{-j}),\vec{p})$ because $x_{ij}\geq\bar{x}_{ij}$.	
		\item $\vec{a}_j<\vec{c}_j$: When $x_{ij}=1$, the auction result will be $\bar{x}_{ij}=1$ as well. When $x_{ij}=0$ and $\sum_{z=1}^{k}(p_z-c_z)\cdot d_i^z\geq 0$, the auction result to the $\bar{x}_{ij}=0$ will not change according to the assignment. However, when $x_{ij}=0$ and $\sum_{z=1}^{k}(p_z-c_z)\cdot d_i^z< 0$, it is possible to happen $\sum_{z=1}^{k}(p_z-c_z)\cdot d_i^z\geq0$, and leading to $\bar{x}_{ij}=1$ if $\delta_{ij}$ is the minimum one among this buyer's winning seller candidates. The utility gained from $TD_i$ less than zero. Thus, we have $u_j^{EN}((\vec{c}_j,\Omega_{-j}),\vec{p})\geq u_j^{EN}((\vec{a}_j,\Omega_{-j}),\vec{p})$.
	\end{itemize}

	From the above, we always have $u_j^{EN}((\vec{c}_j,\Omega_{-j}),\vec{p})\geq u_j^{EN}((\vec{a}_j,\Omega_{-j}),\vec{p})$, and Lemma \ref{lem2} has been proven.
\end{proof}

\begin{thm}\label{thm2}
	The DPAM achieves $\gamma$-truthfulness.
\end{thm}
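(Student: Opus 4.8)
The plan is to reduce the expected-utility inequality defining $\gamma$-truthfulness to the per-price inequalities already established in Lemma \ref{lem1} and Lemma \ref{lem2}, and then to pay for the mismatch between the two induced price distributions using the $\varepsilon$-differential privacy of Theorem \ref{thm1}. Buyers and sellers are symmetric, so I focus on a buyer $TD_i$ and set $\pi_v(\vec{p})=\Pr[M((v_i,\Omega_{-i}))=\vec{p}]$ and $\pi_b(\vec{p})=\Pr[M((b_i,\Omega_{-i}))=\vec{p}]$ for the selection distributions over $\Theta^k$ induced by the truthful bid $v_i$ and by an arbitrary bid $b_i$. Writing each expectation as $\mathbb{E}[u_i^{TD}]=\sum_{\vec{p}\in\Theta^k}\pi(\vec{p})\,u_i^{TD}(\cdot,\vec{p})$, the target inequality becomes $\sum_{\vec{p}}\pi_v(\vec{p})\,u_i^{TD}((v_i,\Omega_{-i}),\vec{p})\geq\sum_{\vec{p}}\pi_b(\vec{p})\,u_i^{TD}((b_i,\Omega_{-i}),\vec{p})-\gamma$.

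First I would apply Lemma \ref{lem1} pointwise: for every fixed $\vec{p}\in\Theta^k$ it gives $u_i^{TD}((v_i,\Omega_{-i}),\vec{p})\geq u_i^{TD}((b_i,\Omega_{-i}),\vec{p})$, so averaging both sides against the single distribution $\pi_v$ yields $\sum_{\vec{p}}\pi_v(\vec{p})\,u_i^{TD}((v_i,\Omega_{-i}),\vec{p})\geq\sum_{\vec{p}}\pi_v(\vec{p})\,u_i^{TD}((b_i,\Omega_{-i}),\vec{p})$. This isolates the only remaining gap, namely comparing the single utility function $u:=u_i^{TD}((b_i,\Omega_{-i}),\cdot)$ under the two distributions $\pi_v$ and $\pi_b$.

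Second, I would transfer between the distributions using differential privacy. Theorem \ref{thm1} supplies $e^{-\varepsilon}\pi_b(\vec{p})\leq\pi_v(\vec{p})\leq e^{\varepsilon}\pi_b(\vec{p})$ for all $\vec{p}$, and $u$ is bounded in an interval of width $\Delta u$: it equals $v_i-\sum_{z}p_z d_i^z$ when $TD_i$ wins and $0$ otherwise, with $v_i,p_z,d_i^z$ all confined to fixed ranges. Shifting $u$ by its minimum value $u_{\min}$ to obtain $\tilde u=u-u_{\min}\in[0,\Delta u]$, applying the lower bound $\pi_v\geq e^{-\varepsilon}\pi_b$ to the nonnegative $\tilde u$, and unshifting gives $\mathbb{E}_{\pi_v}[u]\geq e^{-\varepsilon}\mathbb{E}_{\pi_b}[u]+(1-e^{-\varepsilon})u_{\min}\geq\mathbb{E}_{\pi_b}[u]-(1-e^{-\varepsilon})\Delta u$. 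Chaining this with the first step establishes $\gamma$-truthfulness for every buyer with $\gamma=(1-e^{-\varepsilon})\Delta u\leq\varepsilon\,\Delta u$.

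I would then repeat the argument verbatim for a seller $EN_j$, substituting Lemma \ref{lem2} for the pointwise step and the seller utility (\ref{eq3}) (bounded through $c_{\min},c_{\max}$ and $h_j^z$) for the bounded function, and take the final $\gamma$ to be the larger of the two resulting bounds. The step I expect to be delicate is precisely this distribution transfer: because a misreporting player can incur negative utility---as the $b_i>v_i$ branch in Lemma \ref{lem1} already shows---the naive multiplicative bound $\mathbb{E}_{\pi_v}[u]\leq e^{\varepsilon}\mathbb{E}_{\pi_b}[u]$ points the wrong way and cannot by itself yield a lower bound. The shift-by-$u_{\min}$ device, together with an explicit bound on the utility range, is what converts the $\varepsilon$-DP closeness of $\pi_v$ and $\pi_b$ into a controlled additive loss $\gamma$.
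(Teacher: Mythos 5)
Your proof is correct, and it rests on the same three ingredients as the paper's: the per-price monotonicity of Lemmas \ref{lem1} and \ref{lem2}, the $e^{\pm\varepsilon}$ probability-ratio bound from Theorem \ref{thm1}, and a boundedness argument on the utility. The decomposition is genuinely different, though. The paper folds everything into a single termwise inequality, $\sum_{\vec{p}}\pi_v(\vec{p})u_i^{TD}((v_i,\Omega_{-i}),\vec{p})\geq e^{-\varepsilon}\sum_{\vec{p}}\pi_b(\vec{p})u_i^{TD}((b_i,\Omega_{-i}),\vec{p})$, then uses $e^{-\varepsilon}\geq 1-\varepsilon$ and the bound $\mathbb{E}[u_i^{TD}(b_i,\Omega_{-i})]\leq v_{max}$ to land on $\gamma=\varepsilon v_{max}$ for buyers (and $\varepsilon(c_{max}-c_{min})kh_{max}$ for sellers). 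That chain is valid, but its validity quietly depends on sign conditions: the termwise product manipulation needs $u_i^{TD}((v_i,\Omega_{-i}),\vec{p})\geq 0$ (true by individual rationality of truthful bids, but unstated), and the step $e^{-\varepsilon}\mathbb{E}[u]\geq(1-\varepsilon)\mathbb{E}[u]$ needs $\mathbb{E}[u_i^{TD}(b_i,\Omega_{-i})]\geq 0$, which can fail for an overbidding buyer (the claim then holds only because the right-hand side is negative anyway). Your two-step split---first average Lemma \ref{lem1} under the single distribution $\pi_v$, then transfer $\pi_v\to\pi_b$ on the fixed function $u_i^{TD}((b_i,\Omega_{-i}),\cdot)$ via the shift-by-$u_{\min}$ device---handles the negative-utility issue explicitly and is the more rigorous route; what you pay is a slightly different (and for sellers somewhat larger, roughly a factor of two) constant $\gamma=(1-e^{-\varepsilon})\Delta u$ expressed through the utility range rather than the paper's cleaner $\varepsilon v_{max}$ and $\varepsilon(c_{max}-c_{min})kh_{max}$. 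Since the theorem only asserts $\gamma$-truthfulness for some finite $\gamma$ linear in $\varepsilon$, both constants are acceptable.
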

\begin{proof}
    Given two neighboring inputs $(\vec{b},\vec{A})$ and $(\vec{b}',\vec{A})$, for any buyer $TD_i\in\mathbb{TD}$, we assume that $v_i\in\vec{b}$ and $b_i\in\vec{b}'$. Thus, we have
    \begin{align}
    	&\mathbb{E}\left[u_i^{TD}(v_i,\Omega_{-i})\right]\nonumber\\
    	&=\sum_{\vec{p}\in\Theta^k}\Pr[M(\vec{b},\vec{A})=\vec{p}]\cdot u_i^{TD}((v_i,\Omega_{-i}),\vec{p})\nonumber\\
    	&\geq\exp(-\varepsilon)\cdot\sum_{\vec{p}\in\Theta^k}\Pr[M(\vec{b}',\vec{A})=\vec{p}]\cdot u_i^{TD}((b_i,\Omega_{-i}),\vec{p})\nonumber\\
    	&=\exp(-\varepsilon)\cdot\mathbb{E}\left[u_i^{TD}(b_i,\Omega_{-i})\right]\nonumber\\
    	&\geq(1-\varepsilon)\cdot\mathbb{E}\left[u_i^{TD}(b_i,\Omega_{-i})\right]\nonumber\\
    	&\geq\mathbb{E}\left[u_i^{TD}(b_i,\Omega_{-i})\right]-\nonumber\varepsilon\cdot v_{max}.
    \end{align}
	For any buyer $TD_i\in\mathbb{TD}$, we have $\mathbb{E}[u_i^{TD}(b_i,\Omega_{-i})]\leq\max_{TD_i\in\mathbb{TD}}\{u^{TD}_i\}\leq v_{max}-c_{min}\cdot\min_{TD_i\in\mathbb{TD}}\{\sum_{z=1}^{k}d_i^z\}\leq v_{max}$. Thus, we can conclude that the DPAM achieves $\varepsilon\cdot v_{max}$-truthfulness to buyers.
	
	Given two neighboring inputs $(\vec{b},\vec{A})$ and $(\vec{b},\vec{A}')$, for any seller $EN_j\in\mathbb{EN}$, we assume that $\vec{c}_j\in\vec{A}$ and $\vec{a}_j\in\vec{A}'$. Similarly as the above, we have
	\begin{align}
		&\mathbb{E}\left[u_j^{EN}(\vec{c}_j,\Omega_{-j})\right]\nonumber\\
		&\geq\mathbb{E}\left[u_j^{EN}(\vec{a}_j,\Omega_{-j})\right]-\varepsilon\cdot(c_{max}-c_{min})\cdot k\cdot h_{max}.\nonumber
	\end{align}
	For any seller $EN_j\in\mathbb{EN}$, we have $\mathbb{E}[u_j^{EN}(\vec{c}_j,\Omega_{-j})]\leq\max_{EN_j\in\mathbb{EN}}\{u_j^{EN}\}\leq(c_{max}-c_{min})\cdot\sum_{z=1}^{k}h_j^z\leq(c_{max}-c_{min})\cdot k\cdot h_{max}$. Thus, we can conclude that the DPAM achieves $\varepsilon\cdot(c_{max}-c_{min})\cdot k\cdot h_{max}$-truthfulness to sellers.
	
	Giving $\gamma=\max\{\varepsilon\cdot v_{max}, \varepsilon\cdot(c_{max}-c_{min})\cdot k\cdot h_{max}\}$, the DPAM achieve $\gamma$-truthfulness.
\end{proof}

\begin{thm}\label{thm3}
	The DPAM achieves individual rationality.
\end{thm}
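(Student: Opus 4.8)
The plan is to prove non-negativity of utility separately on the buyer side and the seller side, and within each side to treat winners and losers separately, throughout reading the utilities at the true values (truthful participation), so that $b_i=v_i$ for every buyer and $a_j^z=c_j^z$ for every seller. This is the natural reading of individual rationality: it certifies that a player who reports truthfully is never harmed by taking part. Since the allocation $\vec{X}_{\vec{p}}$ is determined deterministically once a price $\vec{p}$ is fixed, I would first fix an arbitrary realized $\vec{p}\in\Theta^k$ and establish the bounds for that outcome.

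On the buyer side, a losing buyer $TD_i$ has $\sum_{j=1}^{n}x_{ij}=0$, so $u_i^{TD}=0$ by the convention stated after (\ref{eq3}), and the claim is immediate. For a winning buyer, $\sum_{j=1}^{n}x_{ij}=1$; since only members of $\mathbb{TD}_c$ are ever assigned, the winner must have passed the winning-candidate determination test of Algorithm~\ref{alg1}, namely $\sum_{z=1}^{k}p_z\cdot d_i^z\leq b_i$. Substituting $b_i=v_i$ yields $\sum_{z=1}^{k}p_z\cdot d_i^z\leq v_i$, whence $u_i^{TD}=v_i-\sum_{z=1}^{k}p_z\cdot d_i^z\geq 0$ by (\ref{eq2}). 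Thus $u_i^{TD}\geq 0$ for every buyer.

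On the seller side, a losing seller $EN_j$ has $\sum_{i=1}^{m}x_{ij}=0$ and hence $u_j^{EN}=0$. For a winning seller, consider each buyer $TD_i$ with $x_{ij}=1$; such an assignment occurs only when $EN_j\in\mathbb{EN}_{c,i}$, and Condition~(3) of the assignment stage forces $\sum_{z=1}^{k}(p_z-a_j^z)\cdot d_i^z\geq 0$, which under truthful asking $a_j^z=c_j^z$ becomes $\sum_{z=1}^{k}(p_z-c_j^z)\cdot d_i^z\geq 0$. Exchanging the order of summation in (\ref{eq3}) gives $u_j^{EN}=\sum_{i=1}^{m}x_{ij}\cdot\sum_{z=1}^{k}(p_z-c_j^z)\cdot d_i^z$, in which every summand with $x_{ij}=1$ is non-negative and the remaining terms vanish, so $u_j^{EN}\geq 0$.

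I do not expect a genuine obstacle: both bounds drop out directly from the admission tests hard-wired into the winning-candidate and assignment stages. The only points requiring care are bookkeeping ones, namely insisting that the inequalities be evaluated at the true $v_i$ and $c_j^z$ (so that it is truthful participation that is being certified) and respecting the convention that a losing player's utility equals $0$ rather than the raw right-hand side of (\ref{eq2}). Finally, since the bounds hold for every realized $\vec{p}\in\Theta^k$, they hold for whichever price the exponential-mechanism draw of (\ref{eq15}) returns, so individual rationality is guaranteed for the mechanism as a whole.
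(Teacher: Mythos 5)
Your proposal is correct and follows essentially the same route as the paper's proof: evaluate utilities at truthful reports, then observe that the winning-candidate test $\sum_{z=1}^{k}p_z\cdot d_i^z\leq b_i$ forces $u_i^{TD}\geq 0$ for winning buyers and that Condition~(3) of the assignment stage forces every served request to contribute non-negatively to $u_j^{EN}$, with losers trivially at zero. Your version is slightly more explicit about the loser cases and about quantifying over all realized $\vec{p}$, but the substance is identical.
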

\begin{proof}
	According to Theorem \ref{thm2}, no player has the motivation to bid/ask untruthfully. We can consider $b_i=v_i$ for each buyer $TD_i\in\mathbb{TD}$ and $\vec{a}_j=\vec{c}_j$ for each seller $EN_j\in\mathbb{EN}$. Based on the winning candidate determination, each winning buyer $TD_i$ must have $\sum_{z=1}^{k}p_z\cdot d_i^z\leq b_i$, thus its utility $u^{TD}_i\geq 0$. Based on the assignment, each winning seller $EN_j$ that provides resources to buyer $TD_i$ must have $\sum_{z=1}^{k}(p_z-a_z)\cdot d_i^z\geq 0$, which means that providing resources to an IoT devices always bring positive returns. Thus, its utility $u^{EN}_j\geq 0$.
\end{proof}
\begin{thm}
	The DPAM achieves budget balanced.
\end{thm}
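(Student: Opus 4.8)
The plan is to show that the budget-balance quantity in Definition 2 is literally the revenue score $R((\vec{b},\vec{A}),\vec{p})$ defined in (\ref{eq13}), and that this quantity is non-negative for every admissible price vector, hence in particular for the clearing price returned by the pricing stage. First I would recall that budget balance requires
\begin{equation*}
\sum_{j=1}^{n}\left[\sum_{z=1}^{k}(p_z-a_j^z)\cdot\sum_{i=1}^{m}x_{ij}\cdot d_i^z\right]\geq 0,
\end{equation*}
which is exactly $R((\vec{b},\vec{A}),\vec{p})$, so it suffices to prove $R((\vec{b},\vec{A}),\vec{p})\geq 0$.

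Next I would rearrange the summation to isolate the contribution of each assigned buyer-seller pair. Using constraint (4a), each buyer $TD_i$ is served by at most one edge node, so
\begin{equation*}
R((\vec{b},\vec{A}),\vec{p})=\sum_{i=1}^{m}\sum_{j=1}^{n}x_{ij}\left[\sum_{z=1}^{k}(p_z-a_j^z)\cdot d_i^z\right].
\end{equation*}
Every nonzero term corresponds to a pair with $x_{ij}=1$, i.e.\ an assignment that actually took place in the assignment stage of Algorithm \ref{alg1}. I would then invoke Condition (3) of the winning seller candidate test (line 17): an edge node $EN_j$ can be selected to serve $TD_i$, and thus $x_{ij}$ set to $1$, only if $\sum_{z=1}^{k}(p_z-a_j^z)\cdot d_i^z\geq 0$. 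Hence each surviving term in the rearranged sum is non-negative, and the whole revenue is a sum of non-negative terms, giving $R((\vec{b},\vec{A}),\vec{p})\geq 0$.

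Finally I would observe that this argument does not depend on which $\vec{p}$ is drawn: Condition (3) is enforced for every $\vec{p}\in\Theta^k$ during the outer loop, so $R((\vec{b},\vec{A}),\vec{p})\geq 0$ holds uniformly over $\Theta^k$. Therefore the price vector sampled by the exponential mechanism in the pricing stage also yields non-negative revenue, which establishes budget balance. There is no genuinely hard step here; the only point requiring care is the bookkeeping in the rearrangement of the triple sum together with the precise citation of the profitability filter in the assignment stage, since it is exactly that filter which guarantees no negative-margin assignment is ever made and hence that no term can drag the revenue below zero.
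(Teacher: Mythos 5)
Your proof is correct, but it takes a genuinely different route from the paper's. The paper derives budget balance as a one-line corollary of individual rationality (its Theorem 3): the utilities $u_j^{EN}$ of all edge nodes are non-negative, hence so is their sum. That argument has to identify the budget-balance quantity of Definition 2, which is written in terms of the submitted asks $a_j^z$, with the sum of seller utilities, which are defined in (\ref{eq3}) in terms of the true costs $c_j^z$; the paper bridges the gap by invoking its truthfulness result to set $\vec{a}_j=\vec{c}_j$. You instead work directly with the asks: you decompose $R((\vec{b},\vec{A}),\vec{p})$ into per-assignment terms and observe that the profitability filter in the assignment stage (Condition (3), line 17 of Algorithm \ref{alg1}) admits a pair $(i,j)$ only when $\sum_{z=1}^{k}(p_z-a_j^z)\cdot d_i^z\geq 0$, so every realized term is non-negative for every $\vec{p}\in\Theta^k$, and in particular for the price sampled by the exponential mechanism. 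Your route is more self-contained and arguably more robust: it needs no appeal to individual rationality or to truthfulness (which the paper only establishes in the approximate, $\gamma$-truthful sense, so the assumption $\vec{a}_j=\vec{c}_j$ is not literally forced), and it matches the letter of Definition 2, which is stated in terms of asks rather than costs. What the paper's approach buys is brevity once the earlier theorems are in place.
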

\begin{proof}
	The utilities of all edge nodes are positive according to Theorem \ref{thm3}, thus the sum of them (budget) $\sum_{j=1}^{n}u^{EN}_j$ is greater than zero as well.
\end{proof}
\begin{thm}\label{thm5}
	The DPAM does not achieve computational efficiency.
\end{thm}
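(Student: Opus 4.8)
The plan is to locate the single step in Algorithm~\ref{alg1} whose cost grows super-polynomially in the input size and to argue that this blow-up is intrinsic to the exponential mechanism rather than an artifact of a naive implementation. The decisive observation is that the pricing stage (line~33) draws $\vec{p}$ from the distribution (\ref{eq15}), whose normalizing denominator $\sum_{\vec{p}'\in\Theta^k}\exp(\varepsilon R((\vec{b},\vec{A}),\vec{p}')/2\Delta R)$ ranges over the \emph{entire} candidate set $\Theta^k$. Consequently the outer loop (lines~2--32) must visit every $\vec{p}\in\Theta^k$ to evaluate its score $R((\vec{b},\vec{A}),\vec{p})$; this enumeration cannot be shortcut, since each summand depends on the allocation matrix $\vec{X}_{\vec{p}}$ produced for that particular price vector, and there is no closed form collapsing the sum.

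Next I would bound the two factors of the running time separately. Discretizing each coordinate of the price interval $\Theta=[c_{min},c_{max}]$ into $L$ admissible levels yields $|\Theta^k|=L^k$. For a fixed $\vec{p}$, the body of the loop is cheap: the winning-candidate test costs $O(mk)$, the sort of $\mathbb{TD}_c$ costs $O(m\log m)$, the assignment scanning $\mathbb{EN}_{c,i}$ costs $O(mnk)$, and the revenue accumulation (\ref{eq13}) costs $O(mnk)$, so one iteration runs in $\mathrm{poly}(m,n,k)$. Multiplying the two factors, the overall running time is on the order of $L^k\cdot\mathrm{poly}(m,n,k)$.

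Finally I would conclude. Because the number $k$ of resource types is part of the problem instance, the factor $L^k$ is exponential in the input size; no choice of a fixed polynomial discretization removes it, and refining the grid (increasing $L$) only worsens it. Hence DPAM cannot return its output $(\vec{X}_{\vec{p}},\vec{p})$ in time polynomial in the input, which directly contradicts the requirement of Definition~3, and the theorem follows.

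The main obstacle is conceptual rather than computational: since this is a non-existence (negative) claim, there is no delicate estimate to carry out, and the only point demanding care is justifying that the sweep over $\Theta^k$ is genuinely unavoidable, i.e., that the normalization constant in (\ref{eq15}) must be assembled term by term from the per-price revenues $R((\vec{b},\vec{A}),\vec{p})$, so that the $L^k$ factor is essential to any faithful realization of the exponential mechanism rather than a removable inefficiency.
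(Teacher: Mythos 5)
Your proposal is correct and follows essentially the same route as the paper: bound the per-iteration work of the outer loop by a polynomial in $m$, $n$, $k$, multiply by the $|\Theta^k|=L^k$ iterations needed to score every candidate price vector, and conclude that the total running time $O\left((mn+m\log m)\cdot|\Theta^k|\right)$ grows exponentially in $k$. Your additional remark that the normalizing constant in (\ref{eq15}) forces the full sweep over $\Theta^k$ is a reasonable strengthening, but the paper (and the theorem as stated) only requires analyzing the algorithm as written, so no further justification is needed.
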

\begin{proof}
	The main loop to traverse all possible price vectors $\vec{p}\in\Theta^k$ contains $|\Theta^k|$ iterations. For each iteration, the dominant step in winning candidate determination is to sort $\mathbb{TD}_c$, which has at most $m$ elements. Thus, sorting $\mathbb{TD}_c$ is bounded by $O(m\log m)$. Then, in the assignment, it takes $O(n)$ for each buyer $TD_i\in\mathbb{TD}_c$. Thus, its running time is bounded by $O(mn)$. The total time complexity of Algorithm \ref{alg1} is bounded by $O((mn+m\log m)\cdot|\Theta^k|)$. Therefore, the running time increases exponentially with $k$ instead of polynomial time.
\end{proof}

Next, we need to calculate the expected performance of our proposed mechanism. Based on (\ref{eq13}), the expected revenue of edge computing platform can be expressed as
\begin{equation}\label{eq18}
	\mathbb{E}[R(\vec{b},\vec{A})]=\sum_{\vec{p}\in\Theta^k}\Pr[M(\vec{b},\vec{A})=\vec{p}]\cdot R((\vec{b},\vec{A}),\vec{p}).
\end{equation}
To achieve the approximation ratio of the DPAM, we first introduce the following lemma.

\begin{lem}\label{lem3}
	Let $OPT$ be the optimal revenue by solving the problem defined in (\ref{eq4}) and $OPT^*=\max_{\vec{p}\in\Theta^k}\{R((\vec{b},\vec{A}),\vec{p})\}$ be the maximum revenue obtained by the winning candidate determination and assignment process of Algorithm \ref{alg1}. Then, we have
	\begin{equation}
		F(\Theta)\cdot OPT\leq OPT^*\leq OPT
	\end{equation}
	where we denoted by $F(\Theta)=\frac{\max_{\vec{p}\in\Theta^k}\{R((\vec{b},\vec{A}),\vec{p})\}}{(c_{max}-c_{min})\cdot n\cdot k\cdot h_{max}}$ as a factor of $OPT$.
\end{lem}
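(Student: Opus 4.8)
The plan is to prove the two inequalities $OPT^{*}\le OPT$ and $F(\Theta)\cdot OPT\le OPT^{*}$ separately; the former is the substantive step, while the latter follows almost by construction once a crude upper bound on $OPT$ is in hand.

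First I would establish the right-hand inequality $OPT^{*}\le OPT$ by arguing that the winning-candidate-determination and assignment stages of Algorithm~\ref{alg1} always output a \emph{feasible} solution of the program (\ref{eq4}). Fixing any price vector $\vec{p}\in\Theta^{k}$, let $\vec{X}_{\vec{p}}$ be the corresponding allocation matrix, and I would check the four constraints one by one. Constraint (4a) holds because the assignment loop sets $x_{ij^{*}}=1$ for at most a single $EN_{j^{*}}$ per buyer; constraint (4b) holds because the residual capacities ${h_{j}^{z}}'$ are decremented after each assignment and a seller enters $\mathbb{EN}_{c,i}$ only when ${h_{j}^{z}}'\ge d_{i}^{z}$ for all $r_{z}$; constraint (4c) holds because $EN_{j}$ is admitted to $\mathbb{EN}_{c,i}$ only if $\delta_{ij}\le dm_{i}$; and constraint (4d) holds by construction. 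Hence $(\vec{X}_{\vec{p}},\vec{p})$ is feasible for (\ref{eq4}), so its objective value $R((\vec{b},\vec{A}),\vec{p})$ is at most the optimum $OPT$. Taking the maximum over all $\vec{p}\in\Theta^{k}$ yields $OPT^{*}=\max_{\vec{p}}R((\vec{b},\vec{A}),\vec{p})\le OPT$.

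Next I would prove the left-hand inequality. The key auxiliary fact is the crude bound
\begin{equation*}
OPT\le (c_{max}-c_{min})\cdot n\cdot k\cdot h_{max},
\end{equation*}
obtained exactly as in the derivation of the sensitivity (\ref{eq14}): in each summand of the objective (\ref{eq4}) we have $p_{z}-a_{j}^{z}\le c_{max}-c_{min}$ and, by constraint (4b), $\sum_{i=1}^{m}x_{ij}\cdot d_{i}^{z}\le h_{j}^{z}\le h_{max}$, and there are $n\cdot k$ such summands. Substituting the definition $F(\Theta)=OPT^{*}/[(c_{max}-c_{min})\cdot n\cdot k\cdot h_{max}]$ then gives
\begin{equation*}
F(\Theta)\cdot OPT=\frac{OPT^{*}\cdot OPT}{(c_{max}-c_{min})\cdot n\cdot k\cdot h_{max}}\le OPT^{*},
\end{equation*}
since the fraction $OPT/[(c_{max}-c_{min})\cdot n\cdot k\cdot h_{max}]\le 1$ by the crude bound, and $OPT^{*}\ge 0$ because condition (3) in the assignment forces every assigned buyer--seller pair to contribute non-negatively to $R((\vec{b},\vec{A}),\vec{p})$. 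Combining the two parts gives $F(\Theta)\cdot OPT\le OPT^{*}\le OPT$.

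The only delicate point, and the step I would treat most carefully, is the feasibility verification for the upper bound, since that is where the greedy structure of Algorithm~\ref{alg1} is actually exploited. The lower bound is essentially definitional: the normalizing factor $F(\Theta)$ is scaled precisely by an upper bound on the achievable revenue, so that $F(\Theta)\le 1$ and the inequality becomes automatic.
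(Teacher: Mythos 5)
Your proof is correct and follows essentially the same route as the paper's: the upper bound $OPT^{*}\le OPT$ via feasibility of the algorithm's output (which the paper asserts in one line and you verify constraint by constraint), and the lower bound via the crude bound $OPT\le(c_{max}-c_{min})\cdot n\cdot k\cdot h_{max}$ combined with the definition of $F(\Theta)$. Your explicit remarks that the allocation satisfies constraints (4a)--(4d) and that $OPT^{*}\ge 0$ (needed for the final multiplication step) are small but welcome additions of rigor over the paper's version.
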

\begin{proof}
	Because the $OPT$ is globally optimal, we must have $OPT\geq OPT^*$. Based on (\ref{eq4}), we have
	\begin{equation}\label{eq20}
		OPT\leq(c_{max}-c_{min})\cdot n\cdot k\cdot h_{max}
	\end{equation}
	since each edge node provides at most $k\cdot h_{max}$ units of resources and there are total $n$ edge nodes. According to the definition of $OPT^*$, we have
	\begin{align}
		OPT^*&=\max_{\vec{p}\in\Theta^k}\{R((\vec{b},\vec{A}),\vec{p})\}\nonumber\\
		&\geq\frac{\max_{\vec{p}\in\Theta^k}\{R((\vec{b},\vec{A}),\vec{p})\}}{(c_{max}-c_{min})\cdot n\cdot k\cdot h_{max}}\cdot OPT\nonumber\\
		&=F(\Theta)\cdot OPT\nonumber
	\end{align}
	since the relationship $(\ref{eq20})$ exists.
\end{proof}
In order to achieve the truthfulness, the returned revenue is not optimal even though there is no differential privacy. This difference can be bounded by $F(\Theta)$. After introducing the differential privacy, the revenue will be damaged further. 

\begin{thm}\label{thm6}
	The expected revenue of edge computing platform $\mathbb{E}[R(\vec{b},\vec{A})]$ achieved by DPAM and the optimal revenue $OPT$ satisfies that $\mathbb{E}[R(\vec{b},\vec{A})]\geq$
	\begin{equation}
		F(\Theta)\cdot OPT-\frac{6\Delta R}{\varepsilon}\cdot\ln\left(e+\frac{\varepsilon OPT|\Theta^k|}{2\Delta R}\right).
	\end{equation}
\end{thm}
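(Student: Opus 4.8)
The plan is to bound the expected revenue below by leveraging the standard accuracy guarantee of the exponential mechanism, then combine it with Lemma \ref{lem3} to relate $OPT^*$ to $OPT$. The mechanism selects a price vector $\vec{p}$ with probability proportional to $\exp(\varepsilon R((\vec{b},\vec{A}),\vec{p})/(2\Delta R))$, and the score here is the revenue $R((\vec{b},\vec{A}),\vec{p})$. The utility-theorem for the exponential mechanism states that the output score is close to the maximum score $OPT^*=\max_{\vec{p}}R((\vec{b},\vec{A}),\vec{p})$ with high probability, so the first step is to establish a tail bound on the event that the selected revenue falls below $OPT^*$ by some margin.

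First I would fix a threshold $t>0$ and bound $\Pr[R((\vec{b},\vec{A}),\vec{p})\leq OPT^*-t]$. Writing $S=\sum_{\vec{p}'\in\Theta^k}\exp(\varepsilon R((\vec{b},\vec{A}),\vec{p}')/(2\Delta R))$ for the normalizing constant, the probability that the chosen price has score at most $OPT^*-t$ is at most $|\Theta^k|\exp(\varepsilon(OPT^*-t)/(2\Delta R))/S$. Since $S\geq\exp(\varepsilon OPT^*/(2\Delta R))$ (the single term achieving the maximum already lower-bounds the sum), this ratio is at most $|\Theta^k|\exp(-\varepsilon t/(2\Delta R))$. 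This is the familiar exponential tail of the mechanism, and it is the technical heart of the argument.

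Next I would turn the tail bound into a bound on the expectation. Using $\mathbb{E}[R]\geq OPT^*-t-OPT^*\cdot\Pr[R\leq OPT^*-t]$ (crudely, the loss below $OPT^*-t$ is at most $OPT^*$ in magnitude), or more cleanly integrating the tail, I would get $\mathbb{E}[R]\geq OPT^*-t-(\text{something})\cdot|\Theta^k|\exp(-\varepsilon t/(2\Delta R))$. The idea is then to choose $t$ of order $(2\Delta R/\varepsilon)\ln(\cdots)$ so that the exponential term is killed and the additive loss collapses to the logarithmic form in the statement. Matching the specific constant $6\Delta R/\varepsilon$ and the argument $e+\varepsilon OPT|\Theta^k|/(2\Delta R)$ of the logarithm is a matter of selecting $t$ carefully and bounding $OPT^*\leq OPT$; the ``$e+$'' inside the logarithm is the standard trick ensuring the argument stays at least $e$ so that the logarithm is at least $1$, which lets one absorb stray constants cleanly.

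Finally I would invoke Lemma \ref{lem3} to replace $OPT^*$ by $F(\Theta)\cdot OPT$, yielding $\mathbb{E}[R(\vec{b},\vec{A})]\geq F(\Theta)\cdot OPT-(6\Delta R/\varepsilon)\ln(e+\varepsilon OPT|\Theta^k|/(2\Delta R))$. The main obstacle I anticipate is pinning down the precise constant in front of the logarithm: the transition from the tail bound to the expectation, and the optimization over $t$, typically produce a constant like $2$ rather than $6$, so either a looser union-style bound is used or the logarithmic argument is inflated to absorb extra terms. I would expect the bulk of the care to go into that constant-chasing and into justifying that $OPT^*$ may be upper-bounded by $OPT$ inside the logarithm (from $OPT^*\leq OPT$ in Lemma \ref{lem3}), rather than into any conceptually new idea.
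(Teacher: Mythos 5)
Your proposal is correct and follows essentially the same route as the paper: an exponential-mechanism tail bound of the form $|\Theta^k|\exp(-\varepsilon t/(2\Delta R))$, conversion to a bound on the expectation, the choice $t=\frac{2\Delta R}{\varepsilon}\ln\left(e+\frac{\varepsilon OPT^*|\Theta^k|}{2\Delta R}\right)$, and finally Lemma \ref{lem3} together with $OPT^*\leq OPT$ to pass to $F(\Theta)\cdot OPT$. The only cosmetic difference is that you derive the tail bound by lower-bounding the normalizer with the single maximizing term, whereas the paper uses a two-threshold ratio $\Pr[\bar{S}_{2t}]/\Pr[S_t]$, which is why its expectation bound degrades to $OPT^*-3t$ and produces the constant $6$ you correctly anticipated having to chase.
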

\begin{proof}
	Let $OPT^*=\max_{\vec{p}\in\Theta^k}\{R((\vec{b},\vec{A}),\vec{p})\}$ be the maximum revenue returned by the DPAM. For a small constant $t\geq 0$, we define four sets, which are $S_t=\{\vec{p}:R((\vec{b},\vec{A}),\vec{p})>OPT^*-t\}$, $\bar{S}_t=\{\vec{p}:R((\vec{b},\vec{A}),\vec{p})\leq OPT^*-t\}$,
	$S_{2t}=\{\vec{p}:R((\vec{b},\vec{A}),\vec{p})>OPT^*-2t\}$, and $\bar{S}_{2t}=\{\vec{p}:R((\vec{b},\vec{A}),\vec{p})\leq OPT^*-2t\}$. Thus, we have $\Pr[M(\vec{b},\vec{A})\in\bar{S}_{2t}]\leq$
	\begin{align}
		&\leq\frac{\Pr[M(\vec{b},\vec{A})\in\bar{S}_{2t}]}{\Pr[M(\vec{b},\vec{A})\in S_t]}\nonumber\\
		&=\frac{\sum_{\vec{p}\in\bar{S}_{2t}}\frac{\exp\left(\frac{\varepsilon R((\vec{b},\vec{A}),\vec{p})}{2\Delta R}\right)}{\sum_{\vec{p}'\in\Theta^k}\exp\left(\frac{\varepsilon R((\vec{b},\vec{A}),\vec{p}')}{2\Delta R}\right)}}{\sum_{\vec{p}\in S_t}\frac{\exp\left(\frac{\varepsilon R((\vec{b},\vec{A}),\vec{p})}{2\Delta R}\right)}{\sum_{\vec{p}'\in\Theta^k}\exp\left(\frac{\varepsilon R((\vec{b},\vec{A}),\vec{p}')}{2\Delta R}\right)}}\nonumber\\
		&=\frac{\sum_{\vec{p}\in\bar{S}_{2t}}\exp\left(\frac{\varepsilon R((\vec{b},\vec{A}),\vec{p})}{2\Delta R}\right)}{\sum_{\vec{p}\in S_t}\exp\left(\frac{\varepsilon R((\vec{b},\vec{A}),\vec{p})}{2\Delta R}\right)}\leq\frac{|\bar{S}_{2t}|\cdot\exp\left(\frac{\varepsilon(OPT^*-2t)}{2\Delta R}\right)}{|S_t|\cdot\exp\left(\frac{\varepsilon(OPT^*-t)}{2\Delta R}\right)}\nonumber\\
		&=\frac{|\bar{S}_{2t}|}{|S_t|}\cdot\exp\left(\frac{-\varepsilon t}{2\Delta R}\right)\label{eq22}.
	\end{align}
	Based on (\ref{eq22}), we have
	\begin{align}
		\Pr[M(\vec{b},\vec{A})\in S_{2t}]&\geq 1-\frac{|\bar{S}_{2t}|}{|S_t|}\cdot\exp\left(\frac{-\varepsilon t}{2\Delta R}\right)\nonumber\\
		&\geq 1-|\Theta^k|\cdot\exp\left(\frac{-\varepsilon t}{2\Delta R}\right)
	\end{align}
	since $\Pr[M(\vec{b},\vec{A})\in S_{2t}]+\Pr[M(\vec{b},\vec{A})\in\bar{S}_{2t}]=1$, $|\bar{S}_{2t}|\leq|\Theta^k|$, and $|S_t|\geq 1$. Thus, the expected revenue $\mathbb{E}[R(\vec{b},\vec{A})]$ can be expressed as
	\begin{align}
		\mathbb{E}[R(\vec{b},\vec{A})]&\geq\sum_{\vec{p}\in S_{2t}}\Pr[M(\vec{b},\vec{A})=\vec{p}]\cdot R((\vec{b},\vec{A}),\vec{p})\nonumber\\
		&\geq\Pr[M(\vec{b},\vec{A})\in S_{2t}]\cdot(OPT^*-2t)\nonumber\\
		&\geq\left[1-|\Theta^k|\cdot\exp\left(\frac{-\varepsilon t}{2\Delta R}\right)\right]\cdot(OPT^*-2t)\nonumber
	\end{align}
	For any $t$ satisfying 
	\begin{equation}\label{eq24}
		t\geq\frac{2\Delta R}{\varepsilon}\cdot\ln\left(\frac{|\Theta^k|OPT^*}{t}\right)
	\end{equation}
    we have $\exp\left(\frac{-\varepsilon t}{2\Delta R}\right)\leq\frac{t}{OPT^*|\Theta^k|}$. Thus,
	\begin{align}
		\mathbb{E}[R(\vec{b},\vec{A})]&\geq\left(1-|\Theta^k|\cdot\frac{t}{OPT^*|\Theta^k|}\right)\cdot(OPT^*-2t)\nonumber\\
		&=OPT^*-3t+\frac{2t^2}{OPT^*}\nonumber\\
		&\geq OPT^*-3t\label{eq25}.
	\end{align}
	By giving $t=\frac{2\Delta R}{\varepsilon}\ln\left(e+\frac{\varepsilon OPT^*|\Theta^k|}{2\Delta R}\right)$, we have
	\begin{align}
		t&=\frac{2\Delta R}{\varepsilon}\cdot\ln\left(e+\frac{\varepsilon OPT^*|\Theta^k|}{2\Delta R}\right)\nonumber\\
		&\geq\frac{2\Delta R}{\varepsilon}\cdot\ln\left(OPT^*|\Theta^k|\frac{\varepsilon}{2\Delta R}\right)\nonumber\\
		&\geq\frac{2\Delta R}{\varepsilon}\cdot\ln\left(\frac{|\Theta^k|OPT^*}{t}\right)\nonumber
	\end{align}
	where it satisfies (\ref{eq24}) because $\ln\left(e+\frac{\varepsilon OPT^*|\Theta^k|}{2\Delta R}\right)\geq1$ and $t\geq\frac{2\Delta R}{\varepsilon}$. Finally, we substitute $t=\frac{2\Delta R}{\varepsilon}\cdot\ln\left(e+\frac{\varepsilon OPT^*|\Theta^k|}{2\Delta R}\right)$ into (\ref{eq25}), we have
	\begin{align}
		\mathbb{E}[R(\vec{b},\vec{A})]&\geq OPT^*-3t\nonumber\\
		&\geq OPT^*-\frac{6\Delta R}{\varepsilon}\cdot\ln\left(e+\frac{\varepsilon OPT^*|\Theta^k|}{2\Delta R}\right)\nonumber\\
		&\geq F(\Theta)\cdot OPT-\frac{6\Delta R}{\varepsilon}\cdot\ln\left(e+\frac{\varepsilon OPT|\Theta^k|}{2\Delta R}\right).\nonumber
	\end{align}
	Therefore, Theorem \ref{thm6} has been proven.
\end{proof}

\section{Implementation and Simulation}
Shown as Theorem \ref{thm5}, the running time of Algorithm \ref{alg1} can be bounded by $|\Theta^k|$, which is not computationally efficient. Thus, in this section, we first discuss an implementation technique to reduce the time complexity to polynomial time. Then, we implement and evaluate our proposed mechanism by extensive simulations.

\subsection{Implementation Technique}
In order to reduce the running time, we can learn from the recent research in \cite{ni2021differentially} to select the unit price of each resource one by one instead of selecting the price vector. The procedure is shown in Algorithm \ref{alg2}. 

\begin{algorithm}[!t]
	\caption{\text{DPAM-S}}\label{alg2}
	\begin{algorithmic}[1]
		\renewcommand{\algorithmicrequire}{\textbf{Input:}}
		\renewcommand{\algorithmicensure}{\textbf{Output:}}
		\REQUIRE $(\{\mathcal{B}_i\}_{TD_i\in\mathbb{TD}},\{\mathcal{A}_j\}_{EN_j\in\mathbb{EN}})$, $\varepsilon$, $\Theta$
		\ENSURE $\vec{X}_{\vec{p}}$, $\vec{p}$
		\STATE Initialize $\bar{b}_i=b_i/(\sum_{z=1}^{k}d_i^z)$ for each $TD_i\in\mathbb{TD}$
		\STATE Initialize $\varepsilon'=\varepsilon/k$
		\FOR {$\ell\leftarrow 1$ to $k$}
		\STATE Initialize $\Delta R^\ell=\sum_{j=1}^{n}(c_{max}-c_{min})\cdot\sum_{z=1}^{\ell}h_j^z$
		\FOR {each $p_\ell\in\Theta$}
		\STATE Initialize $x_{ij}=0$ for each $x_{ij}\in\vec{X}_{\ell,p_\ell}$
		\STATE Initialize $\mathbb{TD}_c^\ell\leftarrow\emptyset$
		\FOR {each $TD_i\in\mathbb{TD}$}
		\IF {$\sum_{z=1}^{\ell}p_z\cdot d_i^z\leq\bar{b}_i\cdot\sum_{z=1}^{\ell}d_i^z$}
		\STATE $\mathbb{TD}_c^\ell\leftarrow\mathbb{TD}_c^\ell\cup\{TD_i\}$
		\ENDIF
		\ENDFOR
		\STATE Sort the $\mathbb{TD}_c^\ell$ s.t. $\sum_{z=1}^{k}d_1^z\geq\sum_{z=1}^{k}d_2^z\geq\cdots$
		\STATE Initialize $\{{h_j^1}',{h_j^2}',\cdots,{h_j^\ell}'\}$ where ${h_j^z}'=h_j^z\in\mathbb{H}_j$
		\FOR {each $TD_i\in\mathbb{TD}_c^\ell$}
			\STATE Initialize $\mathbb{EN}^\ell_{c,i}\leftarrow\emptyset$
			\FOR {each $EN_j\in\mathbb{EN}$}
			\IF {${h_j^z}'\geq d_i^z$ for each $r_z\in\{r_1,\cdots,r_\ell\}$, $\delta_{ij}\leq dm_i$, and $\sum_{z=1}^{\ell}(p_z-a_z)\cdot d_i^z\geq 0$}
			\STATE $\mathbb{EN}^\ell_{c,i}\leftarrow\mathbb{EN}^\ell_{c,i}\cup\{EN_j\}$
			\ENDIF
			\ENDFOR
			\IF {$\mathbb{EN}^\ell_{c,i}\neq\emptyset$}
			\STATE $EN_{j^*}\leftarrow\arg\min_{EN_j\in\mathbb{EN}^\ell_{c,i}}\{\delta_{ij}\}$
			\FOR {each $r_z\in\{r_1,\cdots,r_\ell\}$}
			\STATE ${h_{j^*}^z}'\leftarrow{h_{j^*}^z}'-d_i^z$
			\ENDFOR
			\STATE $x_{ij^*}\leftarrow 1$
			\ENDIF
		\ENDFOR
		\STATE $R^\ell((\vec{b},\vec{A}),p_\ell)=\sum_{j=1}^{n}[\sum_{z=1}^{\ell}(p_z-a_j^z)\sum_{i=1}^{m}x_{ij} d_i^z]$
		\ENDFOR
		\STATE Select a $p_\ell\in\Theta$ according to the selection distribution: 		$\Pr[M^\ell(\vec{b},\vec{A})=p_\ell]=\frac{\exp\left(\frac{\varepsilon' R^\ell((\vec{b},\vec{A}),p_\ell)}{2\Delta R^\ell}\right)}{\sum_{p_\ell'\in\Theta}\exp\left(\frac{\varepsilon' R^\ell((\vec{b},\vec{A}),p_\ell')}{2\Delta R^\ell}\right)}$
		\ENDFOR
		\RETURN $\vec{X}_{k,p_k}$, $\vec{p}=\{p_1,p_2,\cdots,p_k\}$
	\end{algorithmic}
\end{algorithm}

First of all, we define an average unit bid of each buyer $TD_i\in\mathbb{TD}$ as $\bar{b}_i=b_i/(\sum_{z=1}^{k}d_i^z)$. The main loop that iterates $k$ times to check all kinds of resources. When checking the $\ell$-th unit price $p_\ell\in\Theta$ $(1\leq\ell\leq k)$, we have known the previous first $\ell-1$ unit pricings. A partial price vector $(p_1,p_2,\cdots,p_{\ell-1})$ has been determined. Given a unit price $p_\ell\in\Theta$, we select  partial winning buyer candidates $\mathbb{TD}^\ell_c\subseteq\mathbb{TD}$ such that for each buyer $TD_i\in\mathbb{TD}^\ell_c$, we have
\begin{equation}
	\sum_{z=1}^{\ell}p_z\cdot d_i^z\leq\bar{b}_i\cdot\sum_{z=1}^{\ell}d_i^z.
\end{equation}
Then, it geneerates an allocation matrix $\vec{X}_{\ell,p_\ell}$ similar to the DPAM. The partial revenue can be calculate by
\begin{equation}
	R^\ell((\vec{b},\vec{A}),p_\ell)=\sum_{j=1}^{n}\left[\sum_{z=1}^{\ell}(p_z-a_j^z)\cdot\sum_{i=1}^{m}x_{ij}\cdot d_i^z\right].
\end{equation}
Here, the sensitivity of partial score function $R^\ell(\cdot)$ can be written as $\Delta R^\ell=\sum_{j=1}^{n}(c_{max}-c_{min})\cdot\sum_{z=1}^{\ell}h_j^z$. According to the exponential mechanism, the probability distribution of selection a unit price $p_\ell\in\Theta$ can be defined as follows.
\begin{equation}\label{}
	\Pr[M^\ell(\vec{b},\vec{A})=p_\ell]=\frac{\exp\left(\frac{\varepsilon' R^\ell((\vec{b},\vec{A}),p_\ell)}{2\Delta R^\ell}\right)}{\sum_{p_\ell'\in\Theta}\exp\left(\frac{\varepsilon' R^\ell((\vec{b},\vec{A}),p_\ell')}{2\Delta R^\ell}\right)}
\end{equation}
where $\varepsilon'=\varepsilon/k$. Therefore, the time complexity is reduced from $O((mn+m\log m)\cdot|\Theta^k|)$ to $O((mn+m\log m)\cdot k|\Theta|)$ according to Algorithm \ref{alg2}.

\begin{thm}
	The DPAM-S achieves $\varepsilon$-differential privacy, $\gamma$-truthfulness, individual rationality, budget balanced, computational efficiency. Moreover, the expected revenue $\mathbb{E}[R^k(\vec{b},\vec{A})]$ achieved by DPAM-S satisfies that $\mathbb{E}[R^k(\vec{b},\vec{A})]\geq$
	\begin{equation}
		F(\Theta)\cdot OPT-\frac{6k\Delta R}{\varepsilon}\cdot\ln\left(e+\frac{\varepsilon OPT|\Theta|}{2\Delta R}\right).
	\end{equation}
\end{thm}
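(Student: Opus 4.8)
The plan is to establish each listed property by reducing it to the single-shot results already proved for DPAM, and to treat the revenue guarantee as the one genuinely new calculation.

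First, for $\varepsilon$-differential privacy I would observe that iteration $\ell$ of DPAM-S is itself an exponential mechanism over the one-dimensional candidate set $\Theta$ with privacy budget $\varepsilon'=\varepsilon/k$ and sensitivity $\Delta R^\ell$. Repeating verbatim the probability-ratio computation of Theorem \ref{thm1} (with $\Theta$ in place of $\Theta^k$ and $\varepsilon'$ in place of $\varepsilon$) shows each iteration is $\varepsilon'$-differentially private. Since the $k$ coordinate selections are performed sequentially on the same data, the sequential composition theorem for differential privacy gives overall privacy $k\cdot\varepsilon'=\varepsilon$, as required.

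Second, for $\gamma$-truthfulness, individual rationality, and budget balance I would mirror Lemmas \ref{lem1}--\ref{lem2} and Theorems \ref{thm2}--\ref{thm3}. The key observation is that at the terminal iteration $\ell=k$ the admission test $\sum_{z=1}^{k}p_z d_i^z\le\bar b_i\sum_{z=1}^{k}d_i^z$ collapses, via $\bar b_i=b_i/\sum_{z=1}^{k}d_i^z$, to exactly the DPAM condition $\sum_{z=1}^{k}p_z d_i^z\le b_i$, and likewise the seller profitability test at $\ell=k$ matches DPAM. Hence the final allocation $\vec{X}_{k,p_k}$ obeys the same fixed-price utility monotonicity, so the analogues of Lemmas \ref{lem1}--\ref{lem2} hold; combining these with the $\varepsilon$-privacy already established and repeating the $\exp(-\varepsilon)\ge 1-\varepsilon$ argument of Theorem \ref{thm2} yields $\gamma$-truthfulness with the same $\gamma$. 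Individual rationality and budget balance then follow exactly as in Theorem \ref{thm3} and the ensuing budget-balance argument, from the admission and assignment conditions. Computational efficiency is immediate from the complexity count $O((mn+m\log m)\cdot k|\Theta|)$ already recorded.

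Third, for the revenue bound I would adapt the concentration argument of Theorem \ref{thm6} to the coordinate-wise mechanism, applying it to the final selection, which is an exponential mechanism over $\Theta$ with budget $\varepsilon'=\varepsilon/k$ and sensitivity $\Delta R^k=\sum_{j=1}^{n}(c_{max}-c_{min})\sum_{z=1}^{k}h_j^z=\Delta R$. Substituting $\varepsilon'=\varepsilon/k$ and $\Delta R^k=\Delta R$ into the $6\Delta R/\varepsilon$ prefactor of Theorem \ref{thm6} produces the $6k\Delta R/\varepsilon$ factor, while $|\Theta|$ replaces $|\Theta^k|$ since each step ranges over a single coordinate; I would then invoke Lemma \ref{lem3} together with $OPT^*\le OPT$ and the monotonicity of $\ln$ to replace the achievable optimum by $F(\Theta)\cdot OPT$ and clean up the logarithmic argument. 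The hard part will be precisely this last step: unlike DPAM, where one exponential mechanism selects the whole vector and Lemma \ref{lem3} bounds the maximum revenue over all of $\Theta^k$ directly, DPAM-S fixes $p_1,\dots,p_{k-1}$ by earlier random draws, so the optimum available at step $k$ is only a \emph{conditional} one that can fall below $F(\Theta)\cdot OPT$. A fully rigorous treatment must take the outer expectation over $p_1,\dots,p_{k-1}$ and control the accumulated per-coordinate loss — for instance by an inductive bound in which each of the $k$ iterations contributes at most $(6\Delta R/\varepsilon')\cdot\ln(\cdots)$ additive error — rather than applying the single-shot bound only to the terminal step; reconciling this telescoped loss with the clean closed form in the statement, and absorbing the conditioning into the approximation constants, is where the real care is needed.
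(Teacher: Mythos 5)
Your proposal is considerably more detailed than the paper's own proof, which consists of a single sentence citing Theorems \ref{thm1}--\ref{thm6} and an external result (Theorem 7 of \cite{ni2021differentially}) without carrying out any of the reductions you describe. The parts you complete are sound and are presumably what the authors intended: per-iteration $\varepsilon'$-differential privacy followed by sequential composition to recover $\varepsilon$; the observation that at $\ell=k$ the admission test $\sum_{z=1}^{k}p_z d_i^z\le\bar b_i\sum_{z=1}^{k}d_i^z$ reduces, via $\bar b_i=b_i/\sum_{z=1}^{k}d_i^z$, to the DPAM condition $\sum_{z=1}^{k}p_z d_i^z\le b_i$, so the analogues of Lemmas \ref{lem1}--\ref{lem2} and Theorems \ref{thm2}--\ref{thm3} transfer to the final allocation $\vec{X}_{k,p_k}$; and the complexity count $O((mn+m\log m)\cdot k|\Theta|)$ for efficiency. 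Your identification of $\Delta R^k=\Delta R$ and the substitution $\varepsilon'=\varepsilon/k$ also correctly account for the $6k\Delta R/\varepsilon$ prefactor and the appearance of $|\Theta|$ rather than $|\Theta^k|$ in the logarithm.

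The gap you flag in the revenue bound is real, and the paper does not resolve it: applying the single-shot concentration argument of Theorem \ref{thm6} to the terminal exponential mechanism only compares $\mathbb{E}[R^k(\vec{b},\vec{A})]$ against the best revenue achievable at step $k$ \emph{conditional} on the already-drawn $p_1,\dots,p_{k-1}$, and nothing in the paper shows that this conditional optimum is within the claimed additive error of $F(\Theta)\cdot OPT$, or even of $\max_{\vec{p}\in\Theta^k}R((\vec{b},\vec{A}),\vec{p})$. A complete argument would need either the telescoped per-coordinate error accounting you sketch, or a monotonicity property of the partial revenues $R^\ell$ guaranteeing that a good prefix is retained with high probability at every stage; neither is supplied by the bare citation the paper offers in place of a proof. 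So you should treat the revenue inequality for DPAM-S as asserted rather than fully proved --- your honest flagging of this step is the correct assessment, not a defect of your write-up.
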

\begin{proof}
	Based on Theorem \ref{thm1} to Theorem \ref{thm6} in this paper and Theorem 7 in \cite{ni2021differentially}, this theorem can be proven. 
\end{proof}

There are two mechanisms, DPAM and DPAM-S, to maximize the revenue of edge computing platform and satisfy desirable properties. Given a fixed privacy budget $\varepsilon$, the revenue achieved by DPAM is better, but the running time of DPAM-S is better. Which mechanism is better depends on the requirements between performance and running time.

\subsection{Simulation Setup}
To simulate this scenario, we construct a virtual rectangular region with $1000\times 1000$, where there are $m$ IoT devices and $n$ edge nodes distributed in this area uniformly. For each $TD_i\in\mathbb{TD}$, we define its coordinate as $(x_i,y_i)$. Similarly, we have $(x_j,y_j)$ for each $EN_j\in\mathbb{EN}$. The distance $\delta_{ij}$ between IoT device $TD_i$ and edge node $EN_j$ can be written as $\delta_{ij}=\sqrt{(x_i-x_j)^2+(y_i-y_j)^2}$. For each buyer $TD_i\in\mathbb{TD}$, its bidding information contains a maximum permitted distance $dm_i$, which is distributed in $[200\sqrt{2},1000\sqrt{2}]$ uniformly since the maximum distance between IoT devices and edge nodes is $1000\sqrt{2}$ in this area.

Suppose the price of a unit of resources can be normalized in $[0,1]$, then $\Theta=[c_{min},c_{max}]=[0,1]$. To implement our mechanisms, the first step is to discretize this intervel $[0,1]$ so as to traverse all possible price vectors in the space $\Theta^k$. Here, we define a concept called ``granularity'', denoted by $\sigma$. The $\sigma=0.02$ implies that we divide the interval $[0,1]$ equally into fifty parts, that is $\Theta=\{0,0.02,0.04,\cdots,0.98,1\}$. The granularity can be used as an effective method to balance the performance and time complexity. Since $\Theta=[0,1]$, we sample $a_j^z$ for each seller $EN_j\in\mathbb{EN}$ and $r_z\in\mathbb{R}$ uniformly in $[0,1]$. Next, we assume the number of resource types $k\in\{1,2,\cdots,5\}$, and the available range of each resource $[h_{min},h_{max}]=[0,20]$. Therefore, in the simulation, we make $h_j^z\in\mathbb{H}_j$ for each seller $EN_j\in\mathbb{EN}$ and $r_z\in\mathbb{R}$ distributed in $[10,20]$ uniformly. Similarly, we assume the $[d_{min},d_{max}]=[1,5]$, thus we make $d_i^z\in\mathbb{D}_i$ for each buyer $TD_i\in\mathbb{TD}$ and $r_z\in\mathbb{R}$ distributed in $[1,5]$ uniformly.

In the next step, we need to discuss how buyers decide their total bids. That is, how can we sample a total bid $b_i$ for each buyer $TD_i\in\mathbb{TD}$. According to our preceding description, the average price per unit resource is $(1+0)/2=0.5$. Here, we point out a reasonable assumption that the total bid is related to the total demand of the buyer for resources. Thus, we can sample the $b_i$ for each buyer $TD_i\in\mathbb{TD}$ as follows:
\begin{equation}
	b_i=(0.5)\cdot\sum_{z=1}^{k}d_i^z\cdot U(0.7,1.3)
\end{equation}
where the $U(0.7,1.3)$ is a value sampled from the interval $[0.7,1.3]$ uniformly.

Due to the introduction of differential privacy, the auction results have certain randomness. Thus, given a mechanism, its result is the average value of $500$ trials. To analyze the performance of our mechanisms based on differential privacy, we need a reference. For example in line 32 of Algorithm \ref{alg1}, we select a $\vec{p}\in\Theta^k$ such that maximizing $R((\vec{b},\vec{A}),\vec{p})$ as the final result. By removing the randomness (differential privacy) of the DPAM and DPAM-S, we can define two deterministic auction mechanisms, marked by ``DTAM'' and ``DTAM-S'', as references. Here, ``DT'' implies ``deterministic''. Finally, we select three typical metrics to evaluate the performance of our proposed mechanisms, which are shown as follows.
\begin{enumerate}
	\item Expected revenue of edge computing platform: it can be computed by (\ref{eq18}).
	\item Expected satisfaction: the ratio of the number of satisfied IoT devices to the total number of IoT devices.
	\item Running time: the time taken to execute a trial.
\end{enumerate}

\subsection{Simulation Results and Analysis}

\begin{figure*}[!t]
	\centering
	\subfigure[(Expected) Revenue]{
		\includegraphics[width=0.325\linewidth]{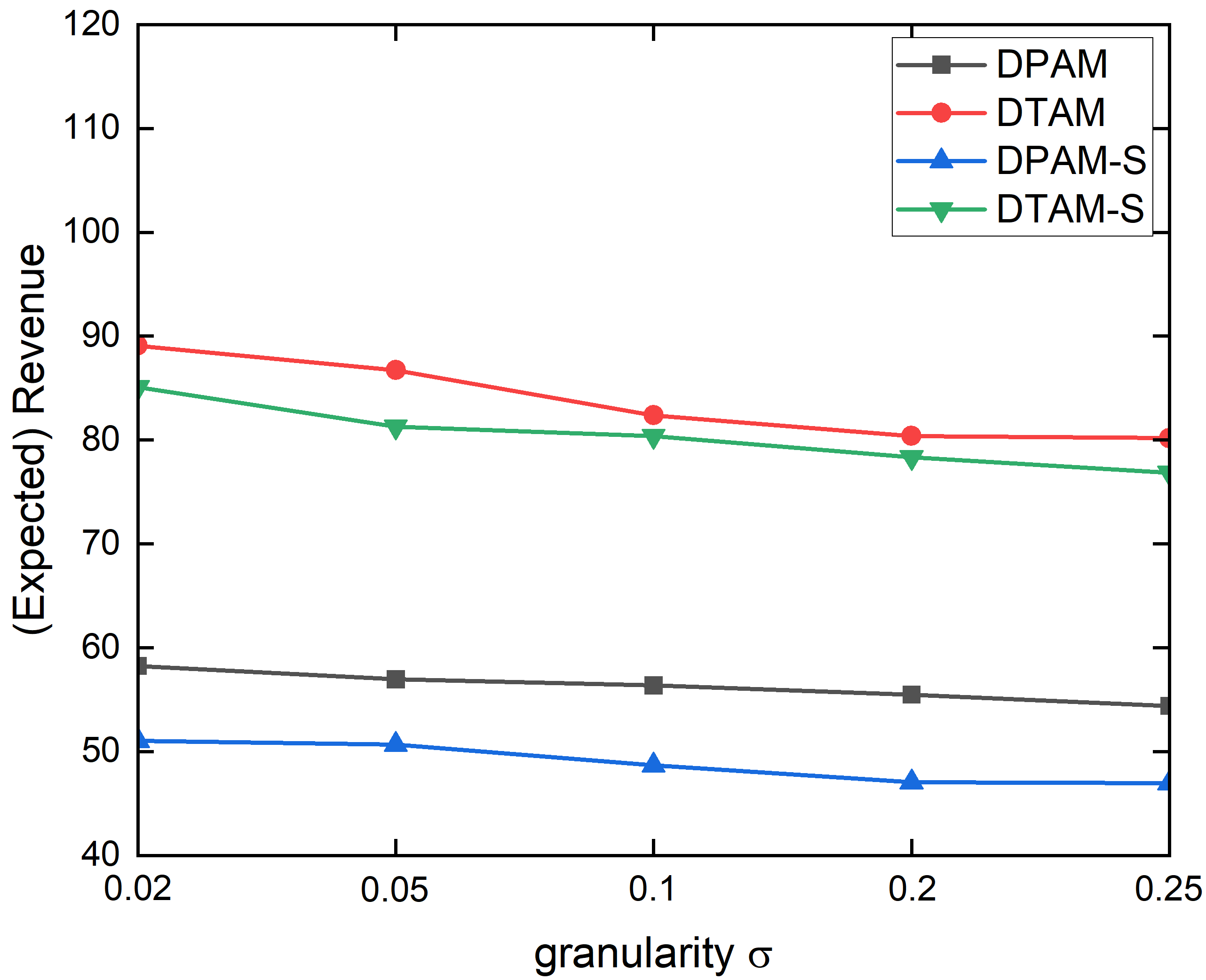}
		%\caption{fig1}
	}%
	\subfigure[(Expected) Satisfaction]{
		\includegraphics[width=0.325\linewidth]{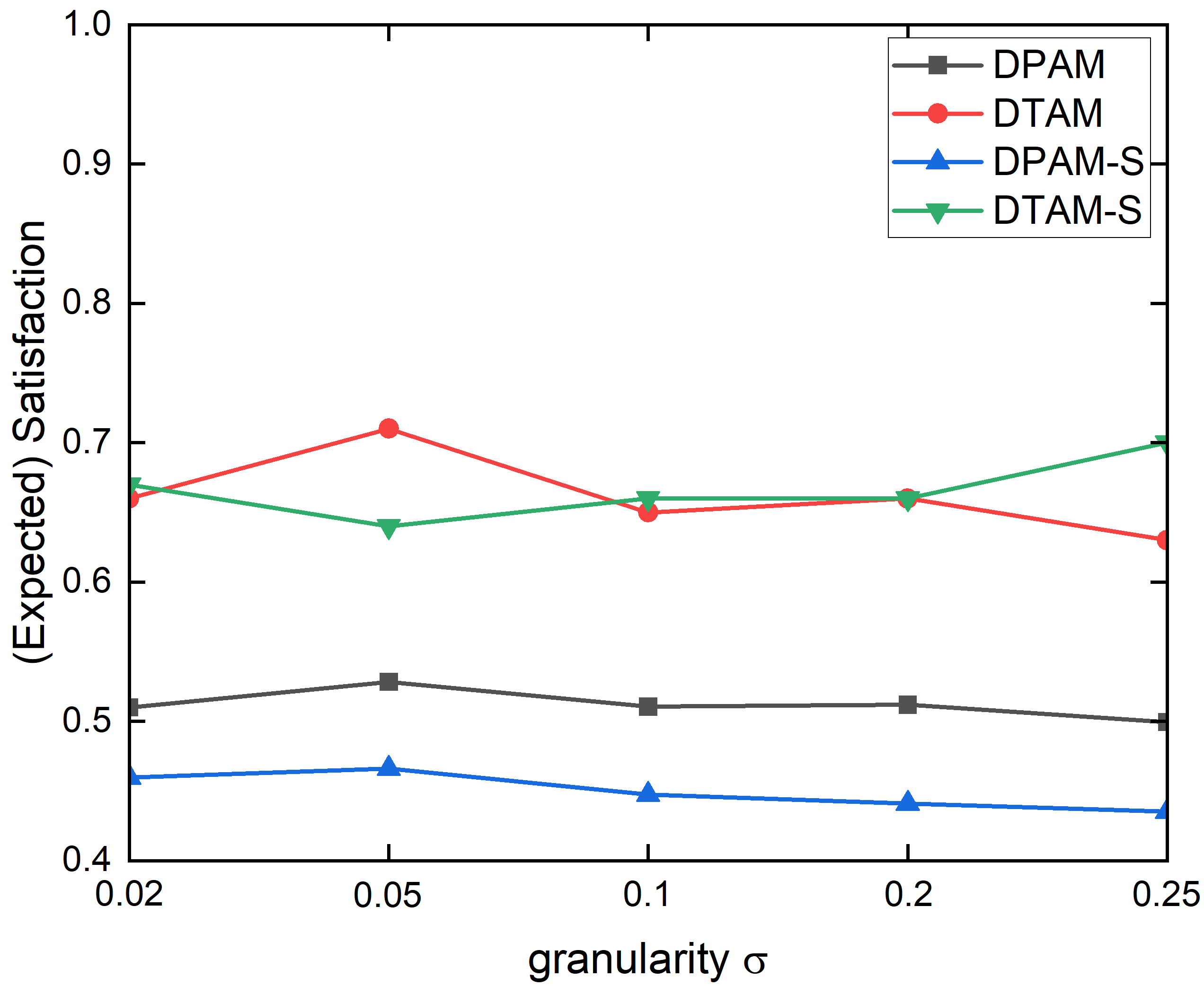}
		%\caption{fig1}
	}%
	\subfigure[Running time]{
		\includegraphics[width=0.325\linewidth]{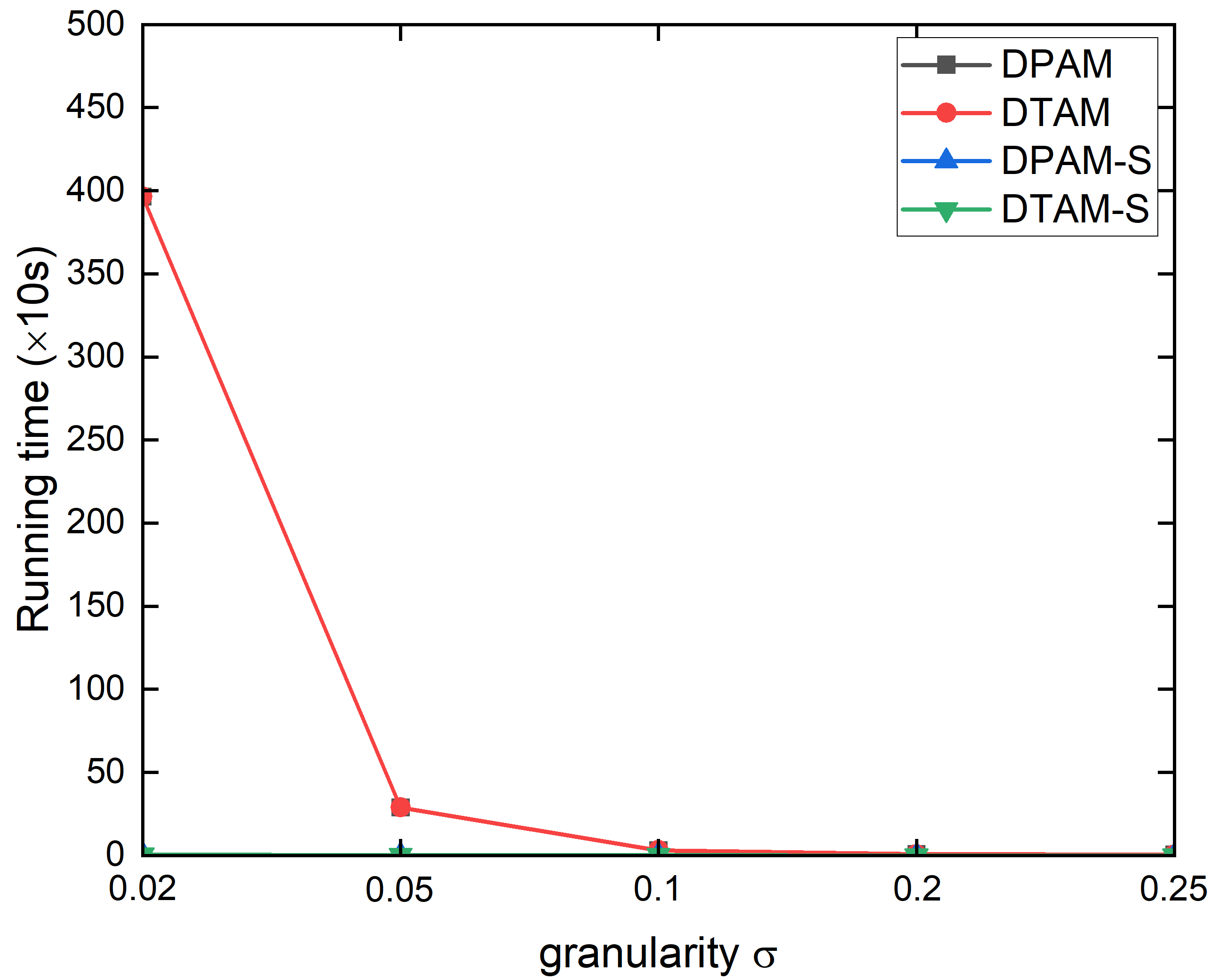}
		%\caption{fig1}
	}%
	\caption{The performances of proposed mechanisms on different granularities, where $m=100$, $n=50$, $k=3$, and $\varepsilon=200$.}
	\label{fig2}
\end{figure*}

\begin{figure*}[!t]
	\centering
	\subfigure[(Expected) Revenue]{
		\includegraphics[width=0.323\linewidth]{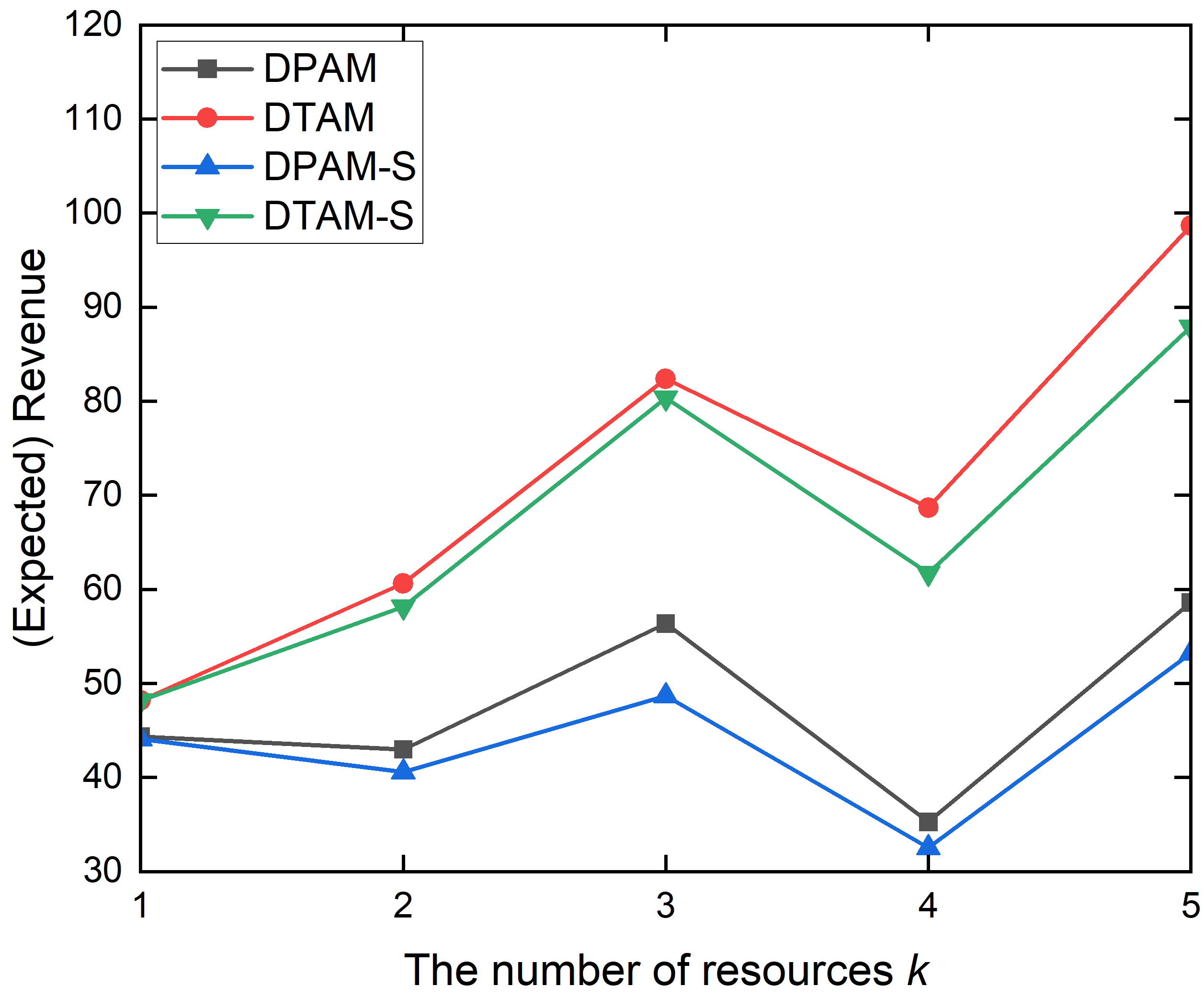}
		%\caption{fig1}
	}%
	\subfigure[(Expected) Satisfaction]{
		\includegraphics[width=0.323\linewidth]{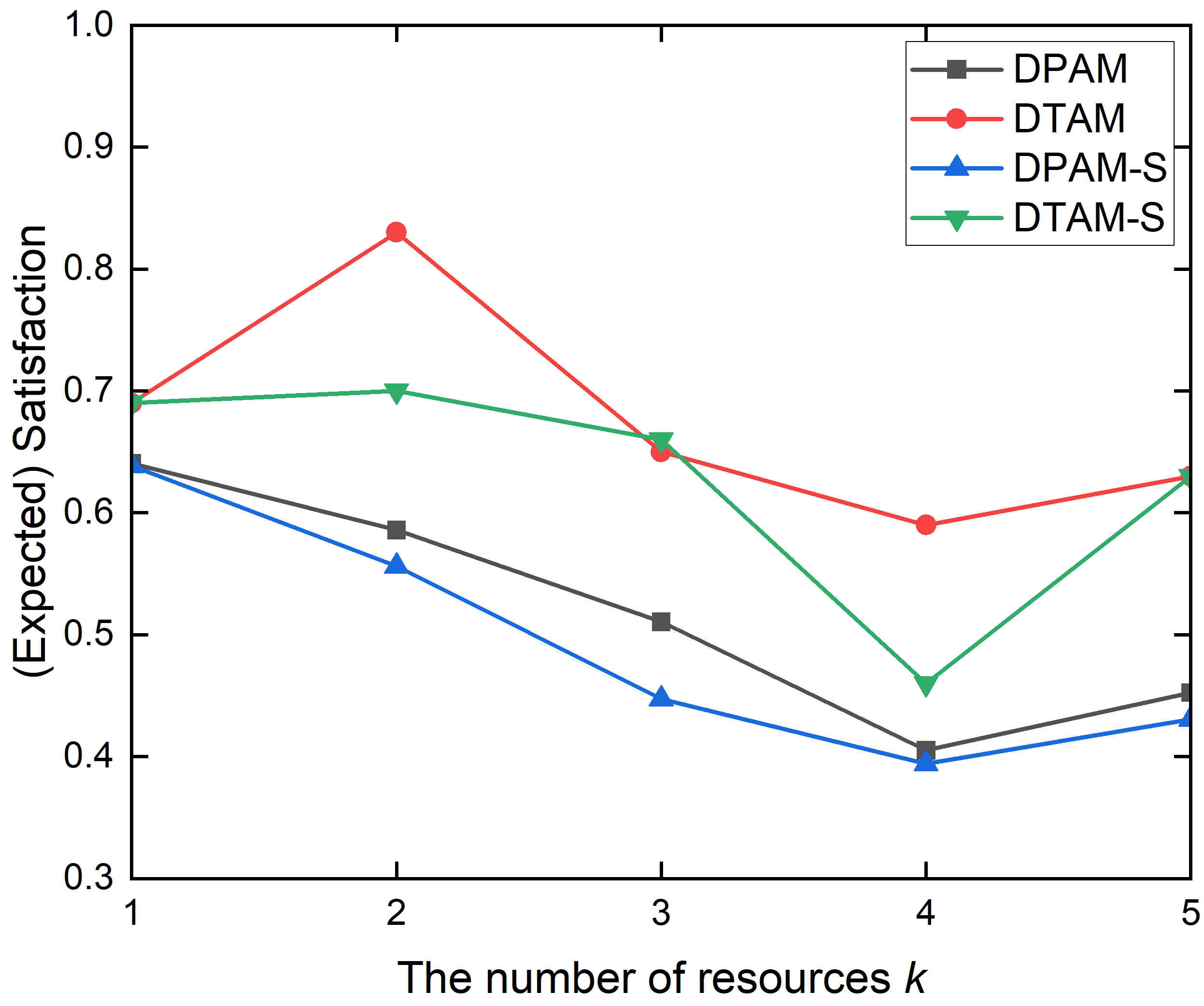}
		%\caption{fig1}
	}%
	\subfigure[Running time]{
		\includegraphics[width=0.323\linewidth]{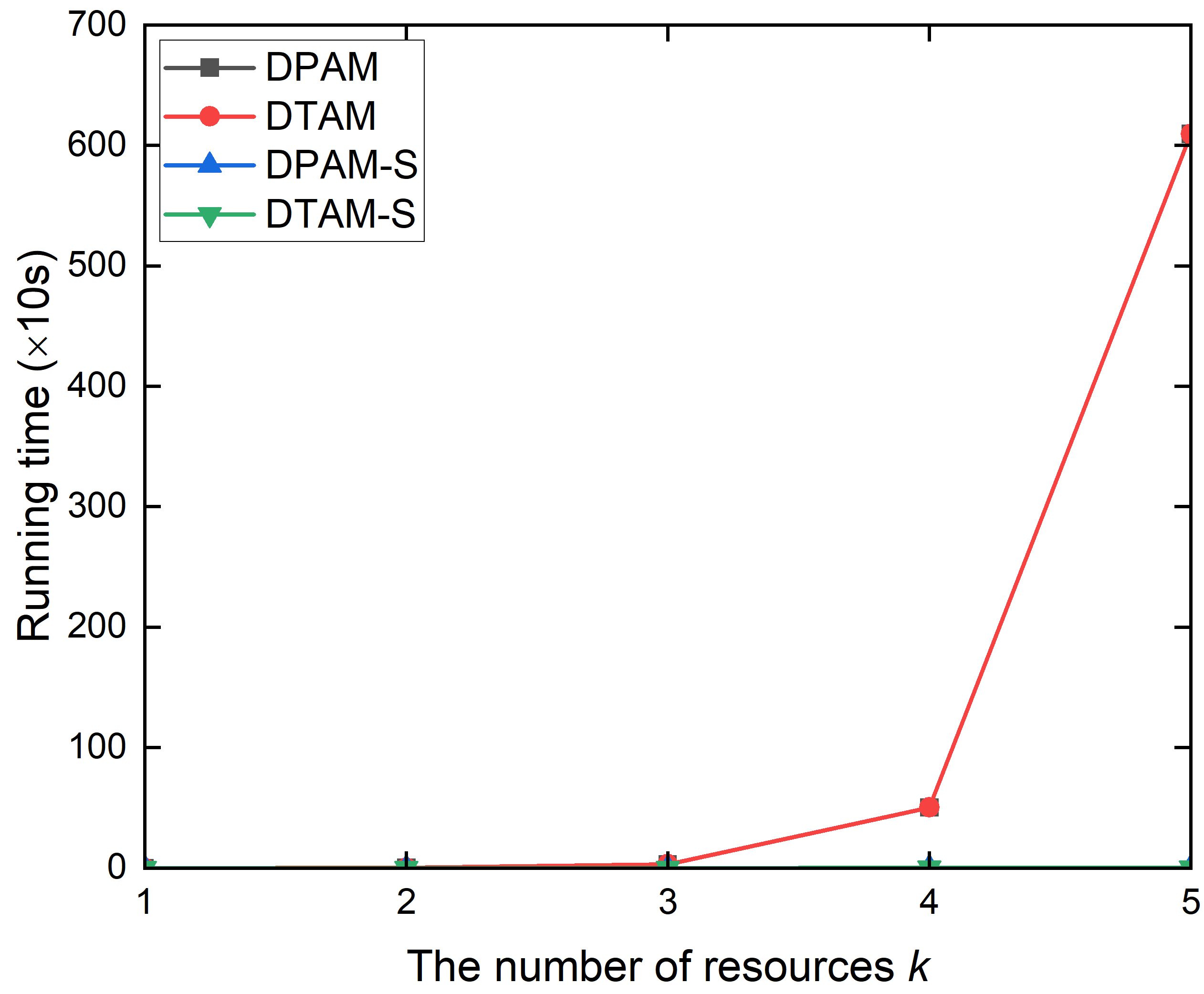}
		%\caption{fig1}
	}%
	\caption{The performances of proposed mechanisms on different number of resourse types, where $m=100$, $n=50$, $\sigma=0.1$, and $\varepsilon=200$.}
	\label{fig3}
\end{figure*}

In any time slot $t\in T$, there are $m$ IoT devices (buyers) and $n$ edge nodes. Generally, the number of IoT devices is much larger than that of edge nodes, thus we assume that $m\geq n$ in our following simulations. Our task in this part can be divided into four parts, which discuss the impact of granularity, the number of resource types, the number of IoT devices, and privacy budget on the performance of our proposed mechanisms respectively.

\textbf{Granularity: }Figure \ref{fig2} plots the revenues, satisfactions, and running times of four different mechanisms vary with the increase of granularity, where we assume $m=100$, $n=10$, $k=3$, and $\varepsilon=200$. Shown as Figure \ref{fig2} (a), we can see that the revenue will decrease slightly with the increase of granularity. This is because smaller granularity means higher accuracy, thus we can compute more price vectors and select the better one. Similar results are also reflected in users' satisfaction. Shown as Figure \ref{fig2} (c), the running time will increase significantly with the decrease of granularity. Here, let us make a rough analysis. Supposing $\sigma_1=0.1\cdot\sigma_2$, we have $|\Theta_1|=10\cdot|\Theta_2|$. In the DPAM (DTAM),  the running time under the granularity $\sigma_1$ is $10^k$ times as much as that under the granularity $\sigma_2$. And in the DPAM-S (DPTM-S), the running time under the granularity $\sigma_1$ is $10\cdot k$ times as much as that under the granularity $\sigma_2$. The simulation results in Figure \ref{fig3} (c) meets our expectations in general. We have mentioned that the granularity is a method to balance performance and time complexity. Based on the results of Figure \ref{fig2}, we set the granularity $\sigma=0.1$ in the following simulations.

\textbf{The number of resource types: }Figure \ref{fig3} plots the performances vary with the increasing number of resource types, where we assume $m=100$, $n=10$, $\sigma=0.1$, and $\varepsilon=200$. Shown as Figure \ref{fig3} (a), we observe that the revenue will show an upward trend with the increase of resource types. This is because the increase in resource types enables each edge nodes to sell more resource units. However, there is an exception when $k=4$. From Figure \ref{fig3} (b), we can see that the users' satisfaction drops obviously when $k=4$. This may be due to the randomness of data, which makes the resource request of IoT devices difficult to realize, which leads to the decline of their satisfaction. In addition, another important discovery is that the gap between DPAM (DPAM-S) and DTAM (DTAM-S) increases with the increase of resource types. Under a larger $k$, the sample space $\Theta^k$ will become larger, resulting in higher randomness. In other words, the probability of choosing the optimal solution will become smaller. Shown as Figure \ref{fig3} (c), the running time will increase significantly with the increasing number of resource types. Similarly, we suppose $k_1=k_2+1$. In the DPAM (DTAM), the running time under the $k_1$ is $|\Theta|$ times as much as that under the $k_2$ since we have  $|\Theta|^{k_1}=|\Theta|^{k_2}\cdot|\Theta|$. And in the DPAM-S (DPTM-S), the running time under the $k_1$ is $(k_2+1)/k_2$ times as much as that under the $k_2$. If there are a large number of resource types, the DPAM (DTAM) is undisirable since its running time grows exponentially. By contrast, the running time of DPAM-S (DTAM-S) grows linearly.

\begin{figure*}[!t]
	\centering
	\subfigure[(Expected) Revenue]{
		\includegraphics[width=0.325\linewidth]{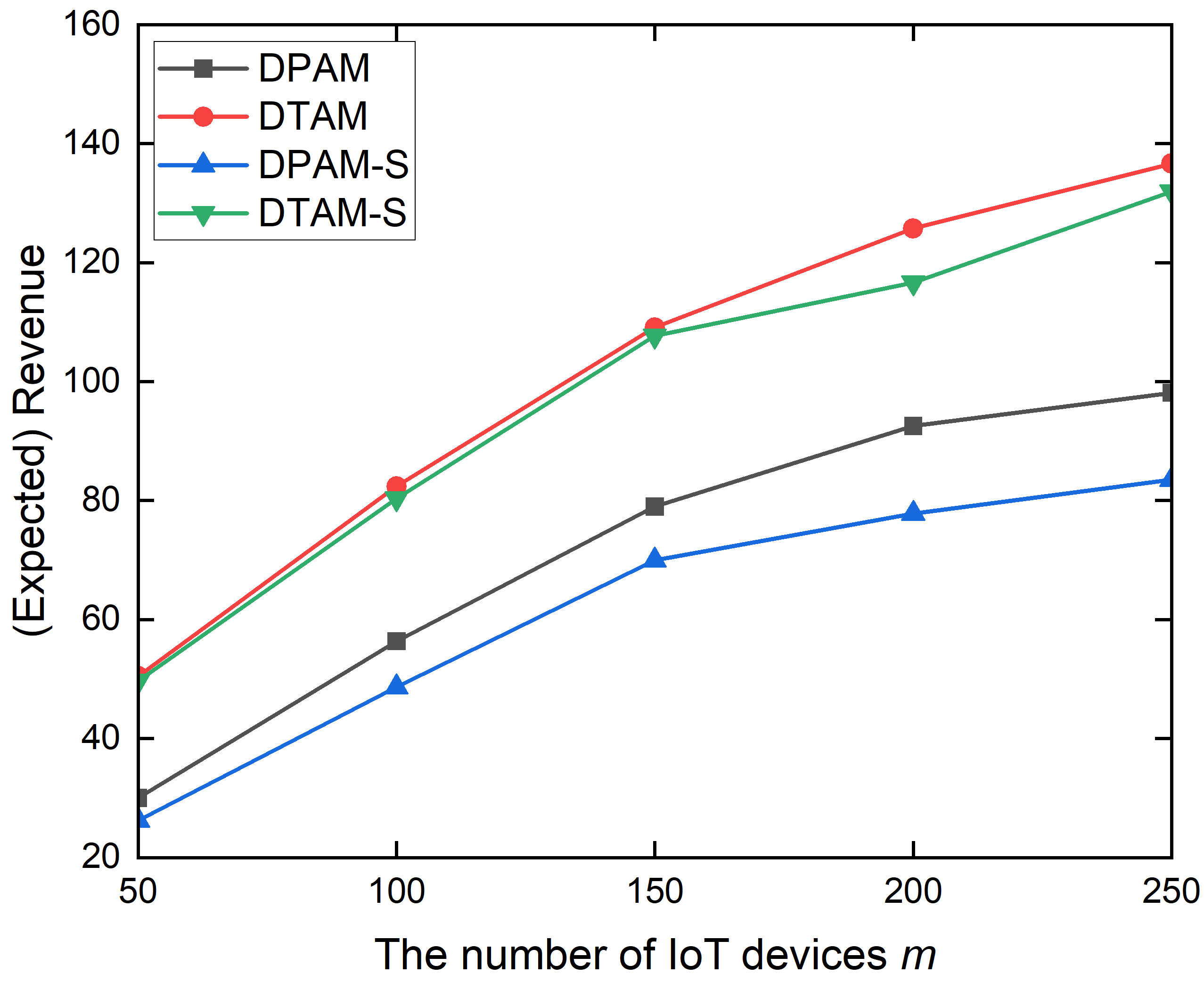}
		%\caption{fig1}
	}%
	\subfigure[(Expected) Satisfaction]{
		\includegraphics[width=0.325\linewidth]{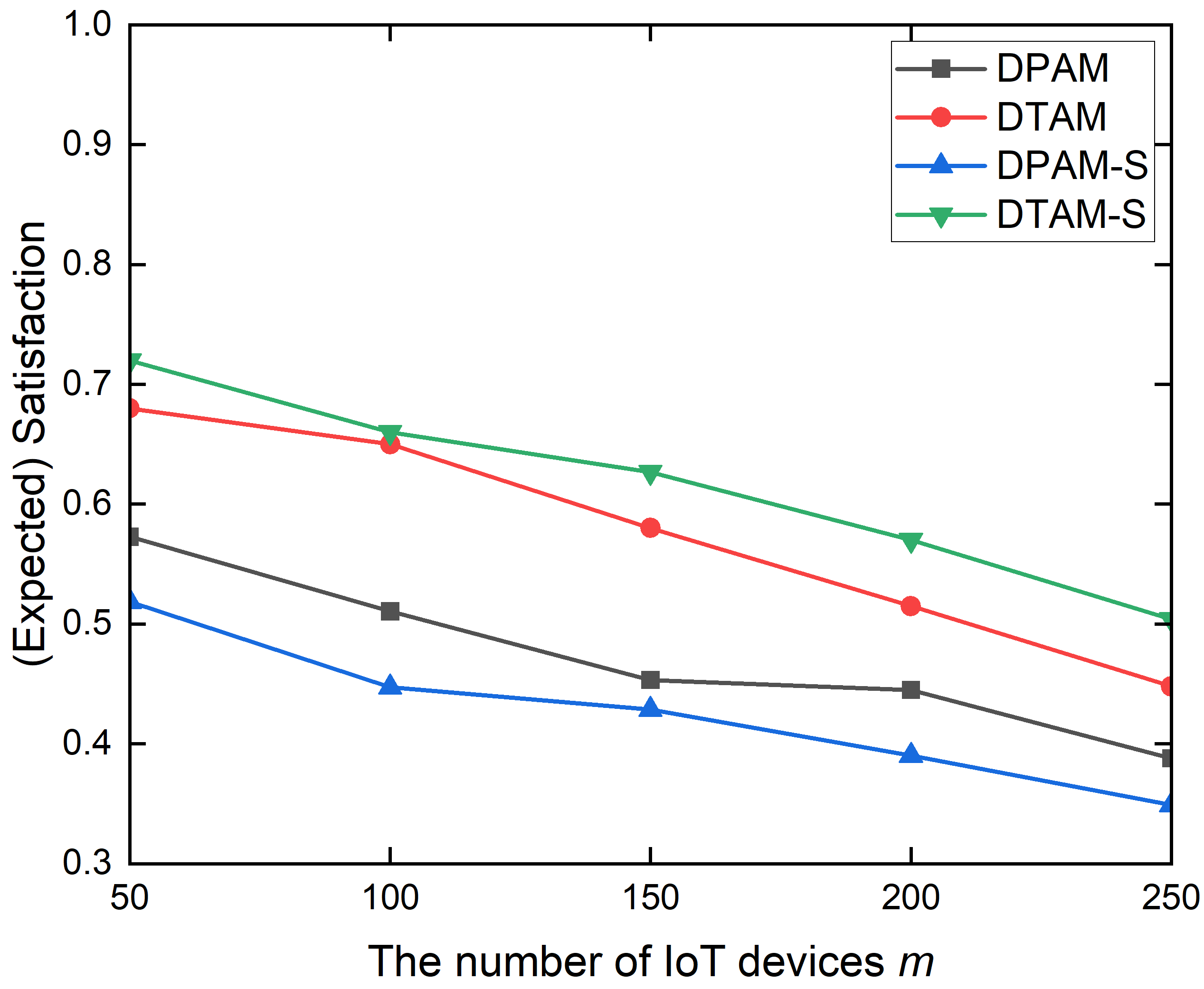}
		%\caption{fig1}
	}%
	\subfigure[Running time]{
		\includegraphics[width=0.325\linewidth]{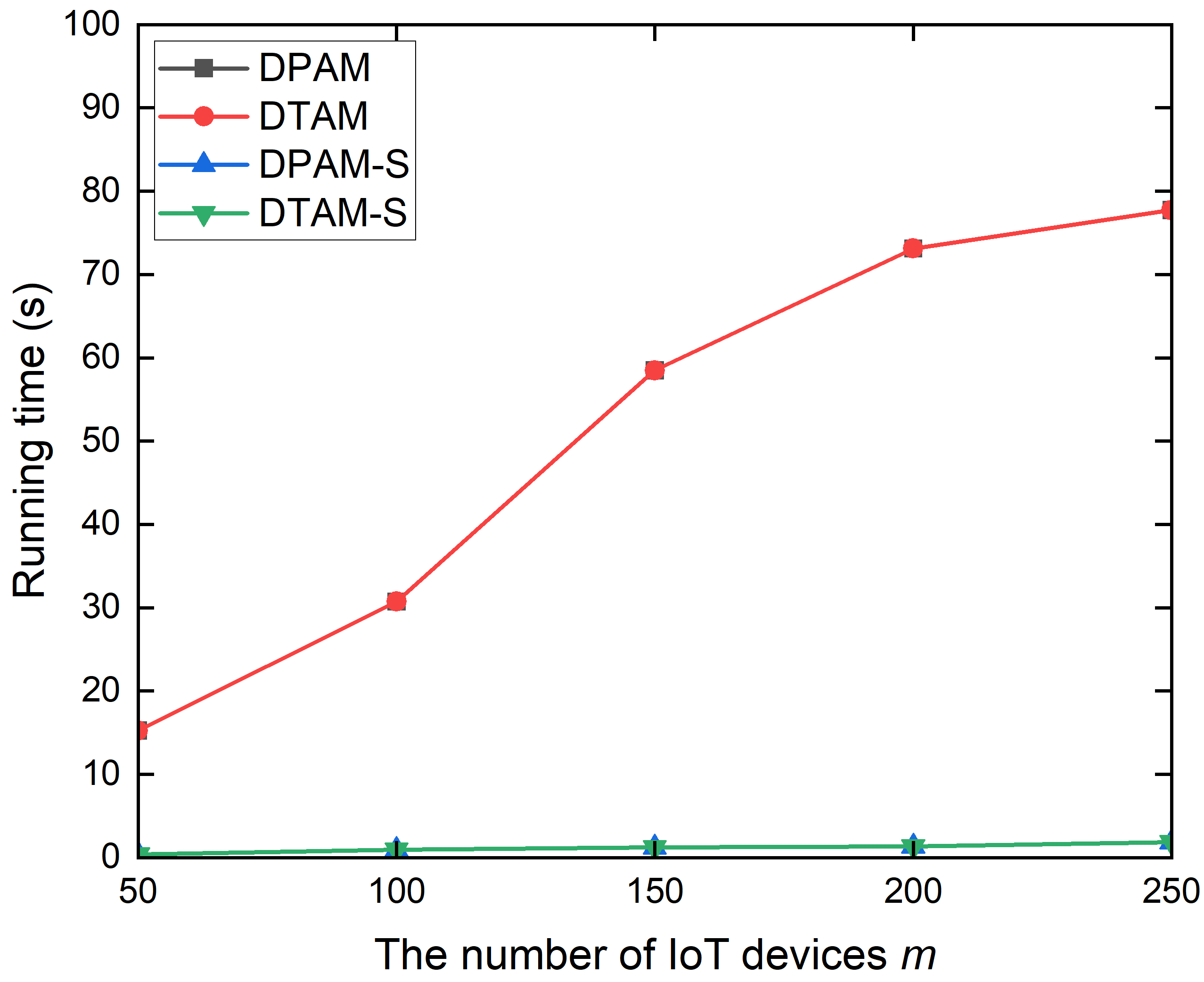}
		%\caption{fig1}
	}%
	\caption{The performances of proposed mechanisms on different number of edge nodes, where $n=50$, $k=3$, $\sigma=0.1$, and $\varepsilon=200$.}
	\label{fig4}
\end{figure*}

\begin{figure*}[!t]
	\centering
	\subfigure[(Expected) Revenue]{
		\includegraphics[width=0.325\linewidth]{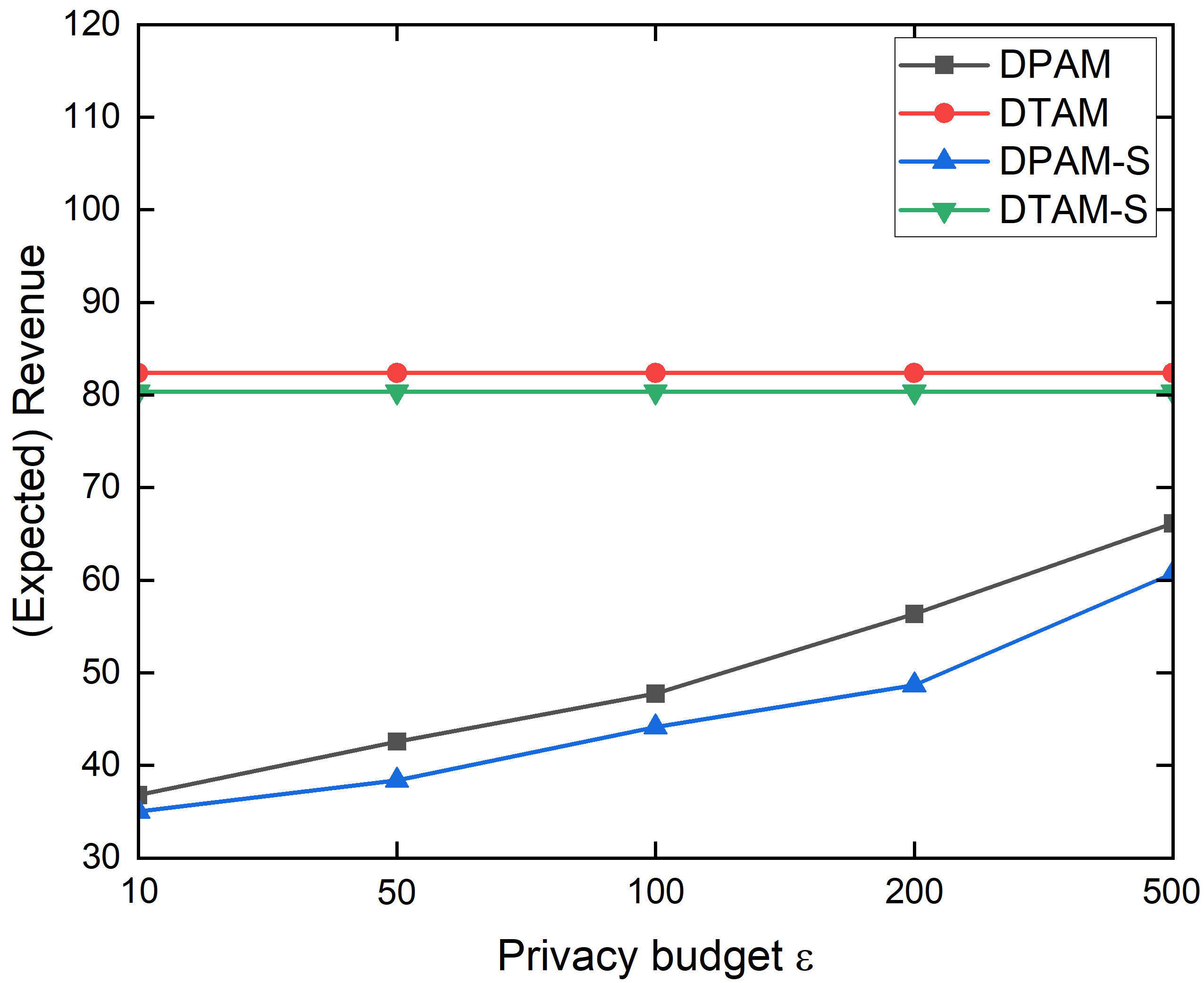}
		%\caption{fig1}
	}%
	\subfigure[(Expected) Satisfaction]{
		\includegraphics[width=0.325\linewidth]{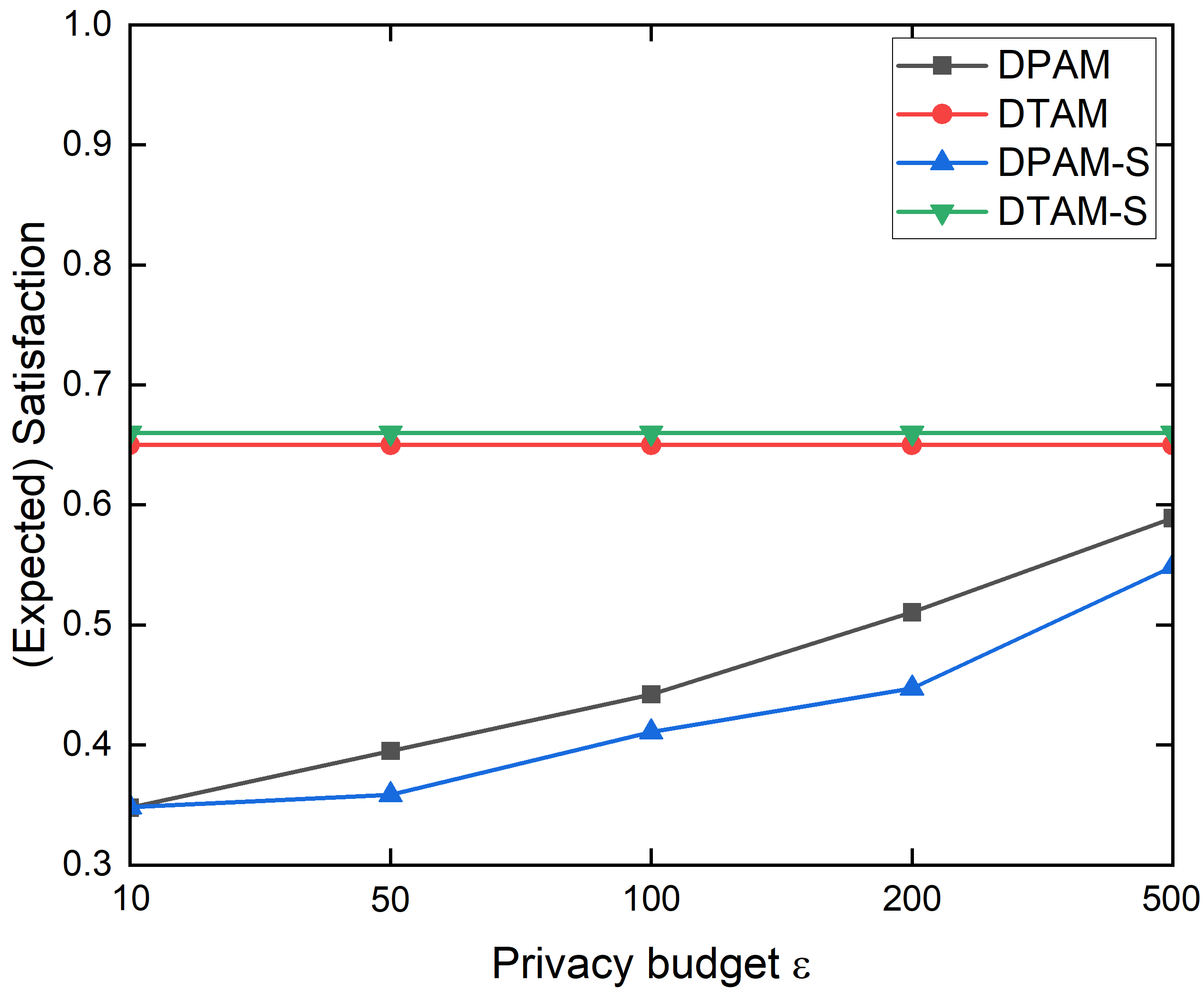}
		%\caption{fig1}
	}%
	\subfigure[Running time]{
		\includegraphics[width=0.325\linewidth]{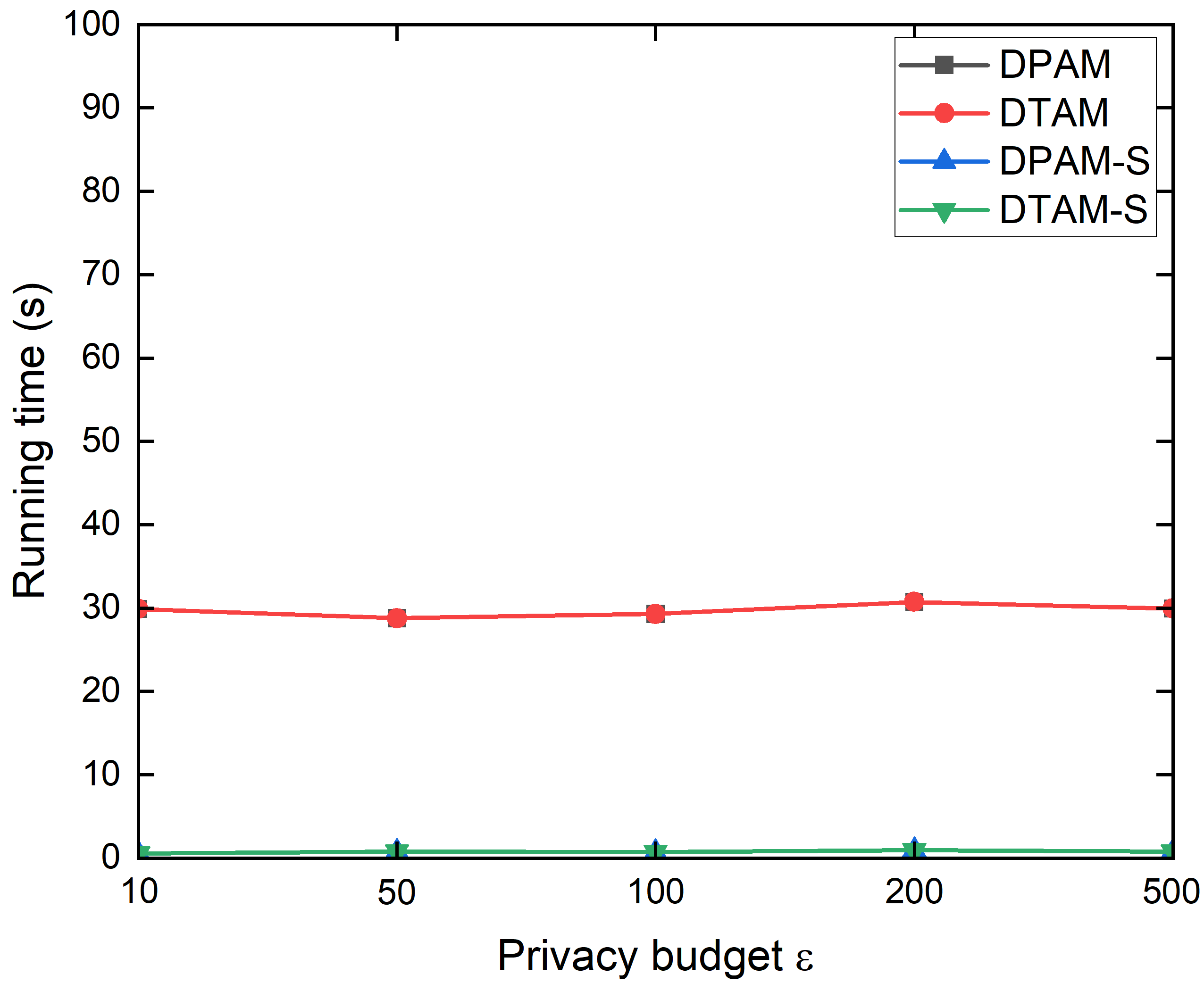}
		%\caption{fig1}
	}%
	\caption{The performances of proposed mechanisms on different privacy budgets, where $m=100$, $n=50$, $k=3$, and $\sigma=0.1$.}
	\label{fig5}
\end{figure*}

\textbf{The number of IoT devices: }Figure \ref{fig4} plots the revenues, satisfactions, and running times of four different mechanisms vary with the increasing number of IoT devices, where we assume $n=10$, $k=3$, $\sigma=0.1$, and $\varepsilon=200$. Shown as Figure \ref{fig4} (a) and (b), we can see that the revenue will increase and the satisfaction will decrease with the increasing number of IoT devices. This is because there are more feasible buyer candidates requesting resources, so that the resources of edge nodes can be more fully utilized. Although more IoT devices can be satisfied, the total number of IoT devices becomes much more, resulting in a decline in satisfaction. Shown as Figure \ref{fig4} (c), the running time will grow linearly with the increasing number of IoT devices, which meets our expectations in general.

\textbf{Privacy budget: }Figure \ref{fig5} plots the performances vary with the increase of privacy budget, where we assume $m=100$, $n=10$, $k=3$, and $\sigma=0.1$. Shown as Figure \ref{fig5} (a) and (b), we observe that the revenue and satisfaction remain unchanged in the DTAM and DTAM-S since they are deterministic mechanisms and have nothing to do with the value of privacy budget. In the DPAM and DPAM-S, the revenue and satisfaction show upward trends with the increase of privacy budget. Actually, the privacy budget controls the degree of protection provided by differetial privacy. The higher the privacy budget, the higher the revenue and satisfaction, but the degree of privacy protection will be weakened. Shown as Figure \ref{fig5} (c), the running time remains the same with the increase of privacy budget, which indicates that the running time has no concern with the choice of privacy budget.

Based on the above four tasks, the main conclusions can be summarized as follows. The granularity affects the running time significantly, and it is usually not necessary to choose a very small granularity to ensure accuracy. In the case of a large number of resource types, the DPAM (DTAM) is not applicable due to the limitation of time complexity. Under the condition of sufficient network bandwidth, the more participating IoT devices, the better the revenue. We need to balance the contradiction between privacy protection and revenue by choosing a privacy budget.

\section{Conclusion}
In this paper, we propose an edge-thing system based on blockchain technology and smart contract, which achieves complete decentralization and tampering-proof. In order to model the resources allocation and pricing between IoT devices and edge nodes, we formulate a novel combinatorial double auction problem. Then, we introduce differential privacy into the auction so as to prevent privacy leakage further. First, we design the DPAM mechanism, and prove it satisfies $\varepsilon$-differential privacy, $\gamma$-truthfulness, individual rationality, budget balance, but not computational efficiency. It is not suitable to use in the case of too many resource types. Then, we propose the DPAM-S mechanism to reduce the time complexity to polynomial time, and satisfy the above desired properties as well. Finally, we built a virtual region to test our proposed mechanisms by extensive simulations, which confirms our theoretical analysis.

\section*{Acknowledgment}

This work is supported by Guangdong Key Lab of AI and Multi-modal Data Processing, National Natural Science Foundation of China (NSFC) Project No. 61872239; BNU-UIC Institute of Artificial Intelligence and Future Networks funded by Beijing Normal University at Zhuhai (BNU Zhuhai) and AI-DS Research Hub, BNU-HKBU United International College (UIC), Zhuhai, Guangdong, China.

\ifCLASSOPTIONcaptionsoff
  \newpage
\fi

\bibliographystyle{IEEEtran}
\bibliography{references}

\begin{IEEEbiography}[{\includegraphics[width=1in,height=1.25in,clip,keepaspectratio]{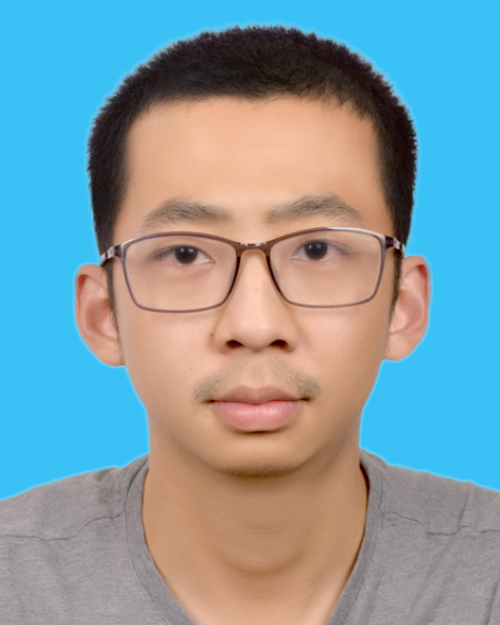}}]{Jianxiong Guo}
	received his Ph.D. degree from the Department of Computer Science, University of Texas at Dallas, Richardson, TX, USA, in 2021, and his B.E. degree from the School of Chemistry and Chemical Engineering, South China University of Technology, Guangzhou, Guangdong, China, in 2015. He is currently an Assistant Professor with the BNU-UIC Institute of Artificial Intelligence and Future Networks, Beijing Normal University at Zhuhai, and also with the Guangdong Key Lab of AI and Multi-Modal Data Processing, BNU-HKBU United International College, Zhuhai, Guangdong, China. His research interests include social networks, algorithm design, data mining, IoT application, blockchain, and combinatorial optimization.
\end{IEEEbiography}

\begin{IEEEbiography}[{\includegraphics[width=1in,height=1.25in,clip,keepaspectratio]{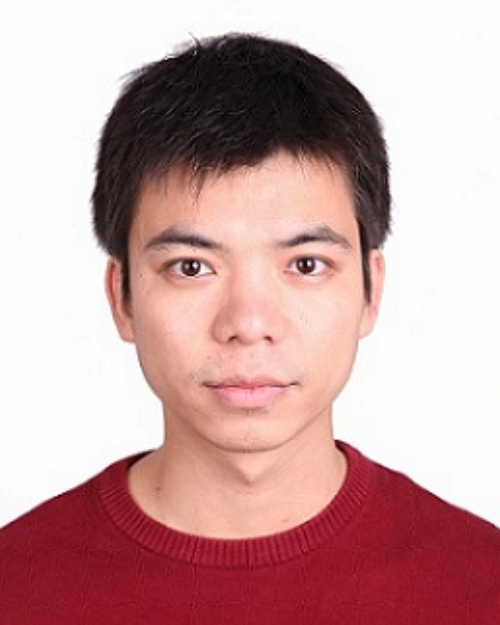}}]{Xingjian Ding}
	received his B.E. degree in electronic information engineering from Sichuan University in 2012 and M.S. degree in software engineering from Beijing Forestry University in 2017. He obtained his Ph.D. degree from the School of Information, Renmin University of China in 2021. He is currently an assistant professor at the School of Software Engineering, Beijing University of Technology. His research interests include wireless rechargeable sensor networks, approximation algorithms design and analysis, and blockchain.
\end{IEEEbiography}

\begin{IEEEbiography}[{\includegraphics[width=1in,height=1.25in,clip,keepaspectratio]{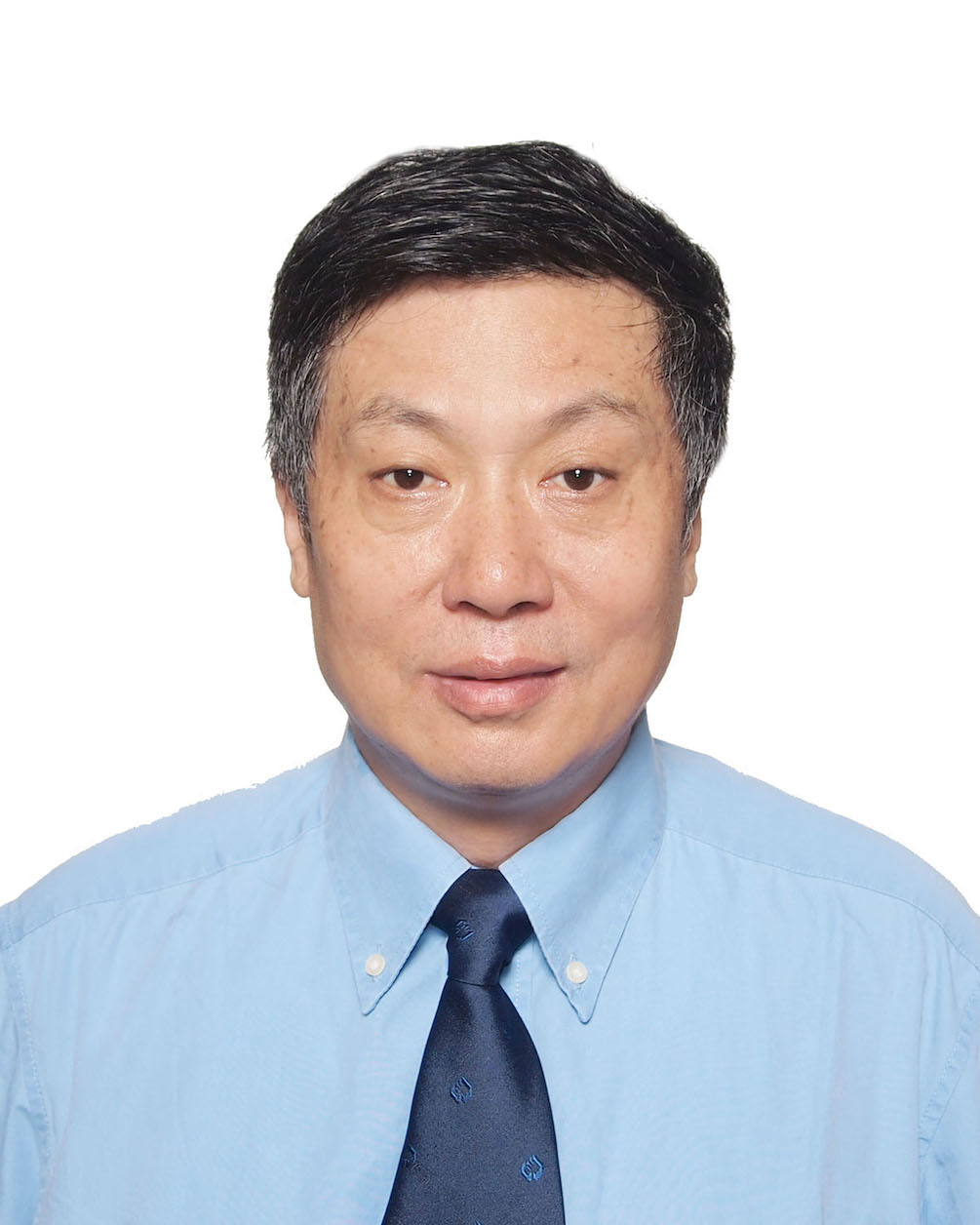}}]{Weijia Jia}
	 is currently a Chair Professor and Director of BNU-UIC Institute of Artificial Intelligence and future Networks, Beijing Normal University at Zhuhai; VP for Research of BNU-HKBU United International College. His contributions have been recognized as optimal network routing and deployment, anycast and QoS routing, sensors networking, AI (knowledge relation extractions; NLP etc.) and edge computing. He has over 600 publications in the prestige international journals/conferences and research books and book chapters. He is the Fellow of IEEE and the Distinguished Member of CCF.
\end{IEEEbiography}
\end{document}